\documentclass[11pt]{article}
\usepackage[utf8]{inputenc}
\usepackage{geometry}
\usepackage{xcolor}
\usepackage{amsmath}
\usepackage{bm}
\usepackage{bbm}
\usepackage{hyperref}
\usepackage{amsfonts}
\usepackage{array}
\usepackage{natbib}
\usepackage{algorithm}          
\usepackage{enumitem}  
\usepackage{graphicx}
\usepackage{algpseudocode}

\newcommand{\indep}{\perp\!\!\!\perp}
\newtheorem{assumption}{Assumption}
\newtheorem{proposition}{Proposition}

\title{Double Variable Importance Matching to Estimate Distinct Causal Effects on Event Probability and Timing}

\author{
\begin{tabular}{c c c}
Yuqi Li$^{1}$ & Quinn Lanners$^{1}$ & Matthew M. Engelhard$^{1}$ \\
\end{tabular} \\[0.8em]
$^{1}$Department of Biostatistics \& Bioinformatics, Duke University \\[0.3em]
}

\date{}

\begin{document}
\maketitle

\begin{abstract}
In many clinical contexts, estimating effects of treatment in time-to-event data is complicated not only by confounding, censoring, and heterogeneity, but also by the presence of a cured subpopulation in which the event of interest never occurs. In such settings, treatment may have distinct effects on (1) the probability of being cured and (2) the event timing among non-cured individuals. Standard survival analysis and causal inference methods typically do not separate cured from non-cured individuals, obscuring distinct treatment mechanisms on cure probability and event timing. To address these challenges, we propose a matching-based framework that constructs distinct match groups to estimate heterogeneous treatment effects (HTE) on cure probability and event timing, respectively. We use mixture cure models to identify feature importance for both estimands, which in turn informs weighted distance metrics for matching in high-dimensional spaces. Within matched groups, Kaplan–Meier estimators provide estimates of cure probability and expected time to event, from which individual-level treatment effects are derived. We provide theoretical guarantees for estimator consistency and distance metric optimality under an equal-scale constraint. We further decompose estimation error into contributions from censoring, model fitting, and irreducible noise. Simulations and real-world data analyses demonstrate that our approach delivers interpretable and robust HTE estimates in time-to-event settings. 
\end{abstract}

\section{Introduction}

The effects of treatment on clinical outcomes can often be divided into short-term effects and long-term effects. In maternal health, prescribing prophylactic azithromycin after labor and delivery might delay infection but ultimately fail to prevent it \citep{tita2016adjunctive}. In mental health, psychotherapy might reduce rates of mental health conditions, yet paradoxically hasten recognition of conditions already present\citep{jorm2017has}. 
In early-stage melanoma, immunotherapy might lead to long-term remission (i.e., cure) for some patients but simply delay recurrence in others \citep{michielin2020evolving, boutros2023cured}.
In these and many more cases, we may wish to distinguish between effects of treatment on (a) the long-term probability of an event of interest, and (b) the short-term timing of that event, should it occur. Drawing a clear distinction between these quantities is key, for example when presenting prognostic information to patients to inform treatment decisions, or when exploring underlying treatment mechanisms. 

In the survival analysis literature, the distinction between short-term and long-term (or even indefinite) effects can be modeled by positing the existence of a \textit{cured subpopulation} in which the event of interest never occurs\citep{farewell1982use}.

We may then differentiate between effects of treatment on (a) the probability of being \textit{cured} (\textit{i.e.}, long-term effect), and (b) the time to event distribution among non-cured patients (\textit{i.e.}, short-term effects). As illustrated conceptually in Figure \ref{fig:intro}, treatments can and often do confer long-term benefit even while decreasing the expected time to event should the event take place.
However, the traditional survival analysis framework and associated causal methods typically conflate these two effects, treating all censored individuals as if they might eventually experience the event. Neglecting this distinction yields an imprecise understanding of the effects of treatment, in turn compromising our ability to understand treatment mechanisms and select optimal clinical actions.

\begin{figure}[htbp]
    \centering
    \includegraphics[width=0.65\columnwidth]{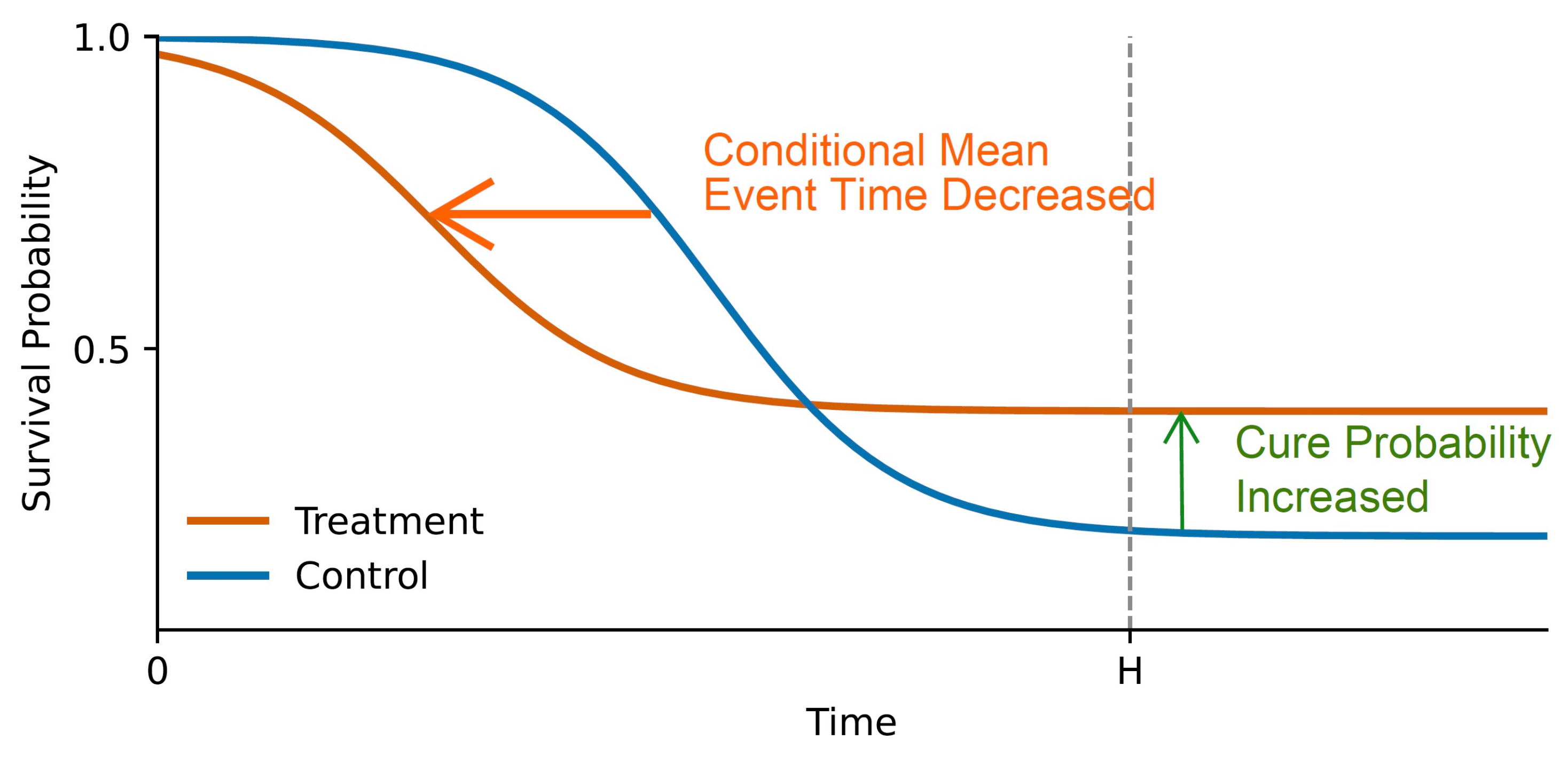}  
    \caption{Hypothetical Survival Curves Where Treatment Increases the Cure Probability yet Reduces the Conditional Mean Event Time.}
    \label{fig:intro}
\end{figure}

In this work, we aim to estimate heterogeneous treatment effects (HTE) on these two clinically distinct components of time-to-event outcomes. To formalize these effects, we define a clinically motivated time horizon $H$ during which relevant events may occur. Patients who do not experience an event before $H$ are considered \textit{cured} (\textit{i.e.,} event-free). Events occurring after $H$ can be disregarded, as they are outside of the finite window relevant to the original condition and its associated treatment.
Under this framework, we define treatment effect estimands for (i) the probability of being cured and (ii) the expected event time among those who do experience an event.

We then propose a matching-based approach for estimating the two types of HTEs. Matching methods are widely used in causal inference for their conceptual simplicity and transparency, as they closely mimic the design of a randomized study \citep{stuart2010matching}. These qualities are particularly valuable in high-stakes applications such as clinical research, where time-to-event data are common.  In our setting, we construct separate matched groups for the two estimands, recognizing that event probability and timing may depend on different covariates. A key challenge is that observational data is often high-dimensional and contains many irrelevant or noisy features. Recent work addresses this issue while preserving the interpretability by using auxiliary outcome models to learn distance metrics that upweight the covariates most associated with the outcome \citep{lanners2023variable, parikh2022malts}. We adapt this approach to our setting by learning two distinct distance metrics  to match on covariates most relevant to cure status and to timing, respectively. We fit a mixture cure model \citep{farewell1982use} whose likelihood jointly incorporates information about cure mechanism and the event time distribution. Then we extract separate feature importances specific to each effect of interest and use them to construct two tailored distance metrics for matching.

Using these distance metrics, we implement a double matching procedure to estimate HTE. For each individual, we construct two separate matched groups based on the two distinct distance metrics learned from the mixture cure model. Each distance metric approximates the optimal covariate weighting for its respective estimand, improving match quality and reducing estimation error. Within each treatment arm, we apply Kaplan-Meier estimators to estimate the survival probability at time $H$, interpreted as cure probability, and the mean event time conditional on event occurrence before $H$. The difference between matched treated and control groups yields interpretable estimates of HTE. We also show that, under standard assumptions, the proposed estimators are consistent for both causal effects. 

Our contributions are to (1) develop a novel matching framework that separately estimates HTE on cure probability and event timing; (2) construct cure- and timing-specific distance metrics for matching based on covariate importances from an auxiliary mixture cure model; (3) show these metrics are optimal for HTE estimation under an equal-scale constraint; (4) demonstrate that Kaplan-Meier estimators within matched groups yield consistent estimates of individual-specific survival curves under standard assumptions; (5) validate our method across diverse simulation settings and a real-world case study; and (6) decompose HTE estimation error into components attributable to censoring, model fitting, and irreducible noise, offering insight into sources and limits of estimation uncertainty.

\section{Literature Review}
With the growing availability of large-scale datasets, there has been increased focus on estimating HTE \citep{wager2018estimation, knaus2021machine}, for which there are numerous regression, weighting, and machine learning methods \citep{yao2021survey}. However, because of the reliance on untestable assumptions, recent work has emphasized interpretable approaches that can be scrutinized by domain experts \citep{shikalgar2025two,parikh2023effects}. Matching methods are especially appealing in this regard, as their conceptual simplicity facilitates incorporating expert judgment into the analysis pipeline \citep{stuart2010matching, Ruoqi2021}

While propensity \citep{rosenbaum1983central} and prognostic score matching \citep{hansen2008prognostic} remain the most widely used approaches, a growing class of almost-matching-exactly (AME) frameworks has been developed to improve interpretability and accuracy in treatment effect estimation \citep{yu2021optimal, diamond2013genetic, wang2021flame, dieng2019interpretable, morucci2020adaptive, katta2024interpretable}. Closest to the approach we employ are Matching After Learning to Stretch (MALTS) \citep{parikh2022malts} and Variable Importance Matching (VIM), which use a stretched distance metric that upweights important variables. Our method extends VIM to time-to-event settings with an event-free subpopulation.

\vspace{-1em} 
\paragraph{Matching for Causal Inference with Time-to-Event Data}

Matching methods are especially appealing in survival settings, where censoring and variable follow-up complicate estimation. In this context, treatment effects can be defined on several estimands, such as survival curves, hazard ratios, or scalar summaries like restricted mean or median survival times \citep{bewick2004statistics}. Extensions of matching have been developed to target each of these quantities: survival curves \citep{austin2014use, xu2023causal}, hazard ratios \citep{austin2015optimal, lee2020empirical}, and restricted mean or median survival times \citep{lin2023matched, yang2019survival, jiang2024causal}. These approaches include widely used frameworks such as propensity score matching \citep{austin2014use, lu2018testing, xu2023causal, lin2023matched}, as well as more recent innovations like sufficient kernel matching \citep{jiang2024causal}, optimal full matching \citep{austin2015optimal, lee2020empirical}, and coarsened exact matching \citep{yang2019survival}.

Most matching approaches for survival data focus on population-level estimands. To estimate conditional treatment effects, modeling-based methods such as Cox regression, AFT BART, and targeted maximum likelihood estimation have been developed \citep{ashouri2024generalized, zhu2020targeted, austin2014use}. Although these methods allow for conditional treatment effect estimation, they are vulnerable to model misspecification, and they are less transparent and thus harder to audit than matching-based methods, limiting trustworthiness in high-stakes settings.

\vspace{-1em} 
\paragraph{Survival Analysis
with a Cured Subpopulation}

Mixture cure and promotion time cure models address time-to-event data with a cured subpopulation, differing in how they represent the cured group \citep{amico2018cure, boag1949maximum, berkson1952survival, yakovlev1996stochastic, chen1999new, tsodikov1998proportional}. These models have been studied for decades, and more recent work has relaxed parametric assumptions through semiparametric and machine learning approaches \citep{farewell1982use, felizzi2021mixture, engelhard2022disentangling, musta20222}. We focus on the mixture cure framework, as it aligns more directly with our estimands of interest.

There is limited existing work on causal inference using time-to-event data with a cured subpopulation. \cite{gao2017estimating} used a promotion time cure model to estimate treatment effects in randomized controlled trials with noncompliance. Most similar to our objective are the mixture cure approaches of \cite{sun2025tree} and \cite{wang2024causal}. \cite{sun2025tree} propose an adaptation of Bayesian causal forest combined with AFT mixture cure models whereas  \cite{wang2024causal} propose semi-parametric estimators for this task. Although effective, these modeling-based approaches yield results that are less transparent and harder to communicate to domain experts than matching-based methods. Matching-based approaches that explicitly address a cured subpopulation and jointly target cure probability and survival among the uncured remain underexplored.

\section{Setup and Estimands}
Consider a dataset $\mathcal{D}$ consisting of $n$ subjects, labeled $i = 1, \ldots, n$. For each subject $i$, let $Z_i$ denote the treatment status, where $Z_i = 1$ indicates active treatment and $Z_i = 0$ indicates control. Each subject has an event time $T_i$, subject to right censoring by $C_i$. We also observe a vector of $d$ pre-treatment covariates, $\bm{X}_i \in \mathcal{X} \subseteq \mathbb{R}^d$, drawn i.i.d.\ across subjects.

To define causal estimands, we adopt the potential outcomes framework \citep{Rubin1980}, under which each subject has two potential failure times $(T_i(0), T_i(1))$ and two potential censoring times $(C_i(0), C_i(1))$, one under each treatment. Since only one potential outcome is observed for each subject, the individual-level treatment effects are not identifiable. Instead, we focus on heterogeneous treatment effects (HTEs) on these time-to-event outcomes.

We define the potential survival probability function under treatment $z$ as $S_{z}(t | \bm{x}) = \Pr(T(z) > t | \bm{X} = \bm{x})$,  $\forall \bm{x}\in\mathcal{X}$. A common causal estimand is the conditional HTE on survival at time $t$:
\begin{equation}
    \tau(t | \bm{x}) = S_{1}(t | \bm{x}) - S_{0}(t | \bm{x}),\quad 0 \leq t \leq t_{\max},
\end{equation}
where $t_{\max}$ is the maximum follow-up time. This estimand is easily intepreted as the causal risk difference between the two arms. However, since it is a function of the time $t$, estimation of $\tau(t|\bm{x})$ is challenging, especially for large $t$ values where few events occur. 

To overcome these difficulties, we set a prespecified time threshold $H$ and focus on the event time only of the events that occur prior to $H$. This approach aligns with common practice in many clinical studies because (a) pharmacologic and other treatments are active only for a limited time, and (b) indefinite follow-up is not practical. Due to potential censoring and the imposed time horizon $H$, we observe only the event time $Y_i = \min\{T_i, C_i, H\}$ and the event indicator $\delta_i = \mathbb{I}(T_i < C_i, T_i < H)$.  

With this setup, we define a latent variable $E_i(z) = \mathbb{I}(T_i(z) < H)$, where $E_i(z) = 0$ indicates the individual is \textit{cured} and $E_i(z) = 1$ indicates the individual is \textit{non-cured} (both under treatment $z$). This allows us to define the following two estimands of interest: (i) the HTE on the \textbf{cure probability}:
\begin{equation}
    \pi(\bm{x}) = \Pr(E_i(1)=0 \mid \bm{X} = \bm{x}) - \Pr(E_i(0)=0 \mid \bm{X} = \bm{x}),
\end{equation}
and (ii) the HTE on the \textbf{conditional mean event time (CMET)},
\begin{equation}
    \begin{aligned}
    \Delta(\bm{x}) &= \mathbb{E}[T_i(1) \mid T_i(1) < H, \bm{X} = \bm{x}] \\
        &- \mathbb{E}[T_i(0) \mid T_i(0) < H, \bm{X} = \bm{x}],        
    \end{aligned}
\end{equation}

\vspace{-1em}
which is conditioning on the \textit{non-cured} unit. We concentrate on estimating $\pi(\bm{x})$ and $\Delta(\bm{x})$ hereafter.

\subsection{Assumptions and Identification}\label{sec:identification}

We adopt Stable Unit Treatment Value Assumption (SUTVA) \citep{Rubin1980} which consists of consistency and no interference assumptions. In addition, we assume the following standard conditions for identification: unconfoundedness, positivity and non-informative censoring \citep{ImbensRubin2015}. Under these assumptions, the following proposition shows that both $\pi(x)$ and $\Delta(x)$ are identifiable with conditional survival probability function $S_z(t \mid x) = \Pr(T > t \mid Z = z, \bm{X} = \bm{x})$. The details of assumptions and derivations are in the supplementary materials.

\begin{proposition}\label{prop:identification}
    If the consistency, unconfoundedness, positivity and non-informative censoring assumptions hold, then the two estimands can be expressed as
    \begin{equation}\label{eq:cure-estimand}
        \pi(\bm{x})  = S_1(H \mid \bm{x}) - S_0(H \mid \bm{x}),
    \end{equation}
    \begin{equation}\label{eq:cmst-estimand}
        \begin{aligned}
            \Delta(\bm{x}) &= \frac{\int_0^H S_1(t|\bm{x}) \, dt -H S_1(H|\bm{x})}{1 - S_1(H|\bm{x})} \\
                      &- \frac{\int_0^H S_0(t|\bm{x}) \, dt -H S_0(H|\bm{x})}{1 - S_0(H|\bm{x})}.
        \end{aligned}        
    \end{equation}
\end{proposition}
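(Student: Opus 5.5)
The plan is to work one treatment arm $z\in\{0,1\}$ at a time. First, we reduce each potential-outcome quantity to a functional of the potential survival function $S_z^{\text{pot}}(t\mid\bm{x})=\Pr(T(z)>t\mid\bm{X}=\bm{x})$. Then we use the identification assumptions to replace this potential survival function by the observable conditional survival function $S_z(t\mid\bm{x})=\Pr(T>t\mid Z=z,\bm{X}=\bm{x})$.

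\textbf{Step 1: identification of the survival function.} Unconfoundedness gives $T(z)\indep Z\mid\bm{X}$, so
\[
\Pr(T(z)>t\mid \bm{X}=\bm{x})=\Pr(T(z)>t\mid Z=z,\bm{X}=\bm{x}).
\]
Consistency ($T=T(z)$ when $Z=z$) then turns the right-hand side into $S_z(t\mid\bm{x})$. Positivity guarantees that the conditioning event $\{Z=z,\bm{X}=\bm{x}\}$ has positive probability or density, so the right-hand side is well defined. Non-informative censoring, meaning $C\indep T\mid Z,\bm{X}$, is what makes $S_z(t\mid\bm{x})$ recoverable on $[0,H]$ from the observed $(Y,\delta)$ through the usual product-limit or hazard representation. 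I would record this as a remark rather than belabor it, since the proposition only asserts an expression in terms of $S_z$.

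\textbf{Step 2: the cure estimand.} By definition, $\Pr(E(z)=0\mid\bm{x})=\Pr(T(z)\ge H\mid\bm{x})$. This equals $S_z^{\text{pot}}(H\mid\bm{x})$ once we adopt the convention that $T$ is continuous, so $\Pr(T(z)=H)=0$ and the strict/non-strict inequality issue between $E$ and $S$ disappears. Taking the difference over $z$ and applying Step 1 gives \eqref{eq:cure-estimand}.

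\textbf{Step 3: the timing estimand.} Write
\[
\mathbb{E}[T(z)\mid T(z)<H,\bm{x}]=\frac{\mathbb{E}[T(z)\mathbb{I}(T(z)<H)\mid\bm{x}]}{1-S(H)},
\]
where the denominator is positive; I would assume this explicitly, i.e.\ a nondegenerate non-cured fraction. For the numerator, use the layer-cake identity $T\,\mathbb{I}(T<H)=\int_0^H \mathbb{I}(t<T<H)\,dt$ and Fubini to obtain
\[
\int_0^H\bigl[S(t)-S(H)\bigr]\,dt=\int_0^H S(t)\,dt-H\,S(H).
\]
Equivalently, this follows from integrating by parts in $-\int_0^H t\,dS(t)$. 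Substituting Step 1 and differencing over $z$ yields \eqref{eq:cmst-estimand}.

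I expect the main obstacle to be minor bookkeeping rather than a genuine difficulty. Specifically: handling the boundary at $t=H$ (atoms of $T$ at $H$, and the strict inequalities in $E$ versus $S$), and stating precisely which form of non-informative censoring is needed for $S_z$ to be estimable given that $Y$ is also truncated at $H$. I would resolve both by assuming a continuous event-time distribution and conditional independence of $C$ and $T$ given $(Z,\bm{X})$, and noting that truncation at $H$ is deterministic and thus harmless.
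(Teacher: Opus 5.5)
Your proposal is correct and follows the paper's argument: unconfoundedness plus consistency replace the potential-outcome survival function by $S_z(t\mid\bm{x})$, and the truncated mean is rewritten as $\bigl(\int_0^H S_z(t\mid\bm{x})\,dt - H S_z(H\mid\bm{x})\bigr)/\bigl(1-S_z(H\mid\bm{x})\bigr)$. The paper obtains the numerator by integrating $-\int_0^H t\,dS_z(t\mid\bm{x})$ by parts, which tacitly assumes a density, whereas your layer-cake/Fubini route gives the same identity without one. Your explicit conditions $\Pr(T(z)=H)=0$ and $1-S_z(H\mid\bm{x})>0$ also cover points the paper glosses over: it writes $\Pr(T(z)>H)$ where the definition of $E$ gives $\Pr(T(z)\ge H)$.
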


Proposition \ref{prop:identification} indicates that the key to the estimation of $\pi(\bm{x})$ and $\Delta(\bm{x})$ lies in the survival probability functions $S_z(t \mid \bm{x})$ for $z \in \{0, 1\}$ and $t \in [0, H]$. This characterization enables nonparametric estimation of the target estimands via Kaplan–Meier estimators applied to matched groups, as we describe in Section~\ref{sec:method-KM}.


\subsection{Mixture Cure Model}
A mixture cure model is comprised of two components that model (a) the cure probability, and (b) the event time distribution. Once the forms of both models are specified, we can estimate model parameters via maximum likelihood estimation. Given that we can only observe time $Y$ and the event indicator $\delta$, then in the mixture cure model, we consider the density $p(y,\delta|\bm{x})$, which is a mixture of cure status and event time information corresponding to non-cured versus cured individuals, respectively. The density and survival function can be expressed by the two components of the mixture cure model, the detailed derivations are provided in the supplementary materials.

With a pre-specified time horizon $H$, the survival function modeled by a mixture cure approach can be written as: 
\begin{equation}
    \begin{aligned}
        S(t) &= \Pr(cure) + [1-\Pr(cure)]S_N(t) \\
        &= \Pr(E=0) + [1 - \Pr(E=0)] \Pr( T>t | T \leq H) \\
        &= \Pr(T>H) + \Pr(T \leq H) \Pr( T>t | T \leq H) \\
        &= \Pr(T>t)
    \end{aligned}
\end{equation}
where $\Pr(cure)$ is the probability of being cured, and $S_N(t)$ is the survival function given the patient is non-cured.  

The density function can be summarized as follows:

\begin{equation}\label{eq: density}
    \begin{aligned}
        p(y, &\delta | \bm{x}) 
        = p(y, \delta=1|\bm{x}) + p(y, \delta=0|\bm{x}) \\
        &= p(E=1|\bm{x}) \mathbb{I}(y<H) \frac{f_T(y|\bm{x})}{F_T(H|\bm{x})} S_C(y|\bm{x}) \\
        &+ p(E=1|\bm{x}) \mathbb{I}(y<H) \left( 1 - \frac{F_T(y|\bm{x})}{F_T(H|\bm{x})} \right) f_C(y|\bm{x}) \\
        &+ p(E=0|\bm{x}) \mathbb{I}(y<H) f_C(y|\bm{x}) \\
        &+ p(E=0|\bm{x}) \mathbb{I}(y=H) S_C(y|\bm{x}) 
    \end{aligned}    
\end{equation}

where $f_T(y|\bm{x})$ and $F_T(H|\bm{x})$ are the \textit{pdf} and \textit{cdf} of the event time distribution, and $f_C(y|\bm{x})$ and $S_C(y|\bm{x})$ are the \textit{pdf} and \textit{1 - cdf} of the censoring time distribution. Under the non-informative censoring assumption, the censoring terms are nuisance components.

The parametric models are often straightforward to fit and yield closed-form $\pi(\bm{x})$ and $\Delta(\bm{x})$, but heavily rely on correct model specification. In contrast, the Kaplan-Meier estimator is a non-parametric approach that makes minimal assumptions but cannot adjust for covariates. Therefore, we propose a framework combining both parametric and non-parametric approaches, which leads to smaller estimation error and is robust to model misspecification.

\section{Methodology}\label{sec:methodology}

We present a matching-based approach for causal inference with time-to-event data that accounts for a cured subpopulation. 
Like similar matching approaches \citep{parikh2022malts,lanners2023variable}, our method learns a distance metric to form matched groups for treatment effect estimation. We extend this framework by using mixture cure models to derive two distinct metrics, one for cure probability and one for CMET, enabling targeted matching for each estimand. Our approach is summarized in Algorithm~\ref{alg:mcm_matching}, and the following three subsections describe each step in detail.

\subsection{Fitting Mixture Cure Models}
To facilitate honest causal inference, we begin by splitting the dataset into a training ($\mathcal{D}_{\text{tr}}$) and estimation set ($\mathcal{D}_{\text{est}}$) \citep{ratkovic2019rehabilitating}. Using only the samples in $\mathcal{D}_{\text{tr}}$,
for each treatment arm $Z\in\{0,1\}$, we fit a mixture cure model with two parts:
\vspace{-0.5em}
\begin{align}\label{eq:cure-prob-mcm}
    P(\text{cure}|\mathbf{X}) &= \sigma(\beta_{0} + \mathbf{X}\boldsymbol{\beta}), 
    \quad \sigma(z)=\tfrac{1}{1+e^{-z}},    
\end{align}
\vspace{-1.7em}
\begin{align}\label{eq:aft-mcm}
    \log(T) &\sim N(\lambda_{0} + \mathbf{X}\boldsymbol{\lambda}, \sigma^2),
\end{align}
where Equation~\ref{eq:cure-prob-mcm} specifies cure probability and Equation~\ref{eq:aft-mcm} specifies an AFT model for event time distribution. Model parameters are estimated by maximizing the likelihood $\mathcal{L} = \sum_{i=1}^N \log p(y_i, \delta_i|\bm{x_i})$ on the training set, and the resulting coefficients $\boldsymbol\beta,\boldsymbol\lambda$ summarize variable importance for the two outcome components.

\subsection{Learning Distance Metrics}
We use the estimated coefficients to quantify feature importance and construct weighted Euclidean distances:
\vspace{-1em}
\begin{equation}\label{eq: distance metric}
\begin{aligned}
d_{\text{cure}}(\mathbf{x},\mathbf{y}) &= \sqrt{ (\mathbf{x}-\mathbf{y})^\top \mathbf{W}_{\text{cure}} (\mathbf{x}-\mathbf{y}) }, \\
d_{\text{time}}(\mathbf{x},\mathbf{y}) &= \sqrt{ (\mathbf{x}-\mathbf{y})^\top \mathbf{W}_{\text{time}} (\mathbf{x}-\mathbf{y}) }, \\
\mathbf{W}_{\text{cure}} &= \frac{1}{2} \, \mathrm{diag}\big(|\boldsymbol{\beta}_1| + |\boldsymbol{\beta}_0|\big), \\
\mathbf{W}_{\text{time}} &= \frac{1}{2} \, \mathrm{diag}\big(|\boldsymbol{\lambda}_1| + |\boldsymbol{\lambda}_0|\big),
\end{aligned}
\end{equation}

\vspace{-0.5em}

where $|\boldsymbol{\beta}_z|$ and $|\boldsymbol{\lambda}_z|$ are vectors of absolute values of the estimated coefficients from the cure and time parts of the mixture cure model in treatment arm $Z=z$, and $\mathrm{diag}(\cdot)$ denotes a diagonal matrix. These weight matrices assign larger importance to covariates more strongly associated with the corresponding outcomes, thereby approximating exact matching on the most influential features.


\subsection{Matching and HTE Estimation} \label{sec:method-KM}
For each unit $i$ in the estimation set, $\mathcal{D}_{\text{est}}$, we first identify matched sets by finding $K$ nearest neighbors from the treated ($M_1$) and control ($M_0$) cohorts using the cure and time distance metrics separately. Based on these matched sets, we then estimate HTE using Kaplan--Meier estimators. 

According to Section \ref{sec:identification}, the HTE on cure probability is estimated using matched groups identified by the cure distance $d_{\text{cure}}$:
\begin{align}\label{eq:hte_cure_estimator}
    \pi(\mathbf{x}_i) = \hat S_{M_1}(H) - \hat S_{M_0}(H),
\end{align}
where $\hat S$ denotes the Kaplan--Meier estimator.

Similarly, the HTE on CMET is estimated using matched groups identified by the time distance $d_{\text{time}}$:
\begin{align}\label{eq:hte_time_estimator}
    \Delta(\bm{x}_i) &= \frac{\int_0^H \hat S_{M_1}(t) \, dt -H \hat S_{M_1}(H)}{1 - \hat S_{M_1}(H)} \notag\\ 
    &- \frac{\int_0^H \hat S_{M_0}(t) \, dt -H \hat S_{M_0}(H)}{1 - \hat S_{M_0}(H)}.
\end{align}

\begin{algorithm}[htbp]
\caption{Double Variable Importance Matching for HTE Estimation}
\label{alg:mcm_matching}
\begin{algorithmic}[1]
\Require Data $\mathcal{D} = \{\mathbf{X}_i, Y_i, \delta_i, Z_i\}_{i=1}^N$, time horz. $H$
\Ensure HTE estimates for cure and survival time
\State \textbf{Split Data}: Divide $\mathcal{D}$ into training set $\mathcal{D}_{\text{tr}}$ (35\%) and estimation set $\mathcal{D}_{\text{est}}$ (65\%)
\State \textbf{Train Models}:
\begin{itemize}
    \item Fit mixture cure models on $\mathcal{D}_{\text{tr}}(Z=0)$ and $\mathcal{D}_{\text{tr}}(Z=1)$ respectively
    \item Extract coefficients $\boldsymbol{\beta}_1, \boldsymbol{\beta}_0$ (cure model) and $\boldsymbol{\lambda}_1, \boldsymbol{\lambda}_0$ (time model)
\end{itemize}
\State \textbf{Construct Weight Matrices}:
\begin{itemize}
    \item $\mathbf{W}_{\text{cure}} \gets \tfrac{1}{2} \mathrm{diag}\big(|\boldsymbol{\beta}_1| + |\boldsymbol{\beta}_0|\big)$
    \item $\mathbf{W}_{\text{time}} \gets \tfrac{1}{2} \mathrm{diag}\big(|\boldsymbol{\lambda}_1| + |\boldsymbol{\lambda}_0|\big)$
\end{itemize}
\State \textbf{Match and Estimate}:
\begin{itemize}
    \item For each $i \in \mathcal{D}_{\text{est}}$:
    \begin{enumerate}
        \item Construct matched groups $M_1, M_0$ using $d_{\text{cure}}$ and $d_{\text{time}}$ to estimate cure and time effects, respectively.
        \item Compute HTE on cure probability and conditional mean event time using Kaplan–Meier estimates from matched groups, plugged into the estimators in Equations~\ref{eq:hte_cure_estimator} and~\ref{eq:hte_time_estimator}.
    \end{enumerate}
\end{itemize}
\State \textbf{Return}: HTE estimates for all estimation units.
\end{algorithmic}
\end{algorithm}

\section{Theoretical Results}

In this section, we establish theoretical guarantees for our proposed framework. First, we show that the matching estimators for both cure probability and event timing are consistent under standard assumptions. We further characterize the optimal distance metric for matching under a linear model assumption. Proofs are provided in the supplementary materials. 

\subsection{Consistency of Matching Estimation}  
We first establish the consistency of the proposed matching estimators for cure and time effects. Recall the form of our estimands in Equations~\ref{eq:cure-estimand} and \ref{eq:cmst-estimand}.
We will need to show the consistency of $\hat S(H|\bm{x})$ and $\int_0^H \hat S(t|\bm{x}) \, \mathrm{d}t$. Let $\{\bm{X}_{i_1}, ..., \bm{X}_{i_K}\} \subset \{\boldsymbol{X}_1, \dots, \boldsymbol{X}_n\}$ denote a matching group of size $K$, chosen as the $K$ nearest neighbors of $\bm{x}$ under metric $d(\cdot,\cdot)$. Let $\widehat S_{Kn}$ be the Kaplan–Meier estimator based on $\{(Y_{i_k},\delta_{i_k})\}_{k=1}^K$.

Define two \textbf{functionals} $F(S) := S(H)$ and $G(S) := \int_0^H S(u)\,\mathrm{d}u$. Consistency of the estimators reduces to showing $F(\widehat S_{Kn})\to F(S(\cdot\mid \boldsymbol{x}))$ and $G(\widehat S_{Kn})\to G(S(\cdot\mid \boldsymbol{x}))$.

\textbf{Theorem 1.}
    Let $\boldsymbol{x}$ be in the support of $\boldsymbol{X}$, \emph{i.e.}, for any $\epsilon > 0$, $\Pr(d(\boldsymbol{X}, \boldsymbol{x}) < \epsilon) > 0$. Assume $S(t\mid \boldsymbol{x})$ is locally Lipschitz in $\boldsymbol{x}$ uniformly in $t$, \emph{i.e.}, there exists a neighborhood of $\boldsymbol{x}$, denoted by $\mathcal{N}_{\boldsymbol{x}}$, and a constant $\tau > H$ such that for all $\boldsymbol{x}', \boldsymbol{x}''$ in $\mathcal{N}_{\boldsymbol{x}}$ and $t \in [0, \tau]$, $|S(t \mid \boldsymbol{x}') - S(t \mid \boldsymbol{x}'')| \leq L \cdot d(\boldsymbol{x}', \boldsymbol{x}'')$ for some finite constant $L$. Then, 
    \vspace{-0.2em}
    $$F(\widehat{S}_{Kn}) \stackrel{p}{\rightarrow} F(S(\cdot \mid \boldsymbol{x})), \quad G(\widehat{S}_{Kn}) \stackrel{p}{\rightarrow} G(S(\cdot \mid \boldsymbol{x})),$$
    \vspace{-0.3em}
    as $K, n \rightarrow \infty$ with $K = o(n)$, where $\stackrel{p}{\rightarrow}$ denotes convergence in probability.

Since the consistency of $F(\widehat S_{Kn})$ and $G(\widehat S_{Kn})$ is established, 
the following corollary follows immediately.

\textbf{Corollary 1.}
    The two quantities \ref{eq:hte_cure_estimator} and \ref{eq:hte_time_estimator} defined in the estimators, which are functions of $F(\widehat S_{Kn})$ and $G(\widehat S_{Kn})$, are also consistent.

\subsection{Optimal Distance Metric}
The distance metrics used for matching should adaptively weight features by their influence on the estimands of interest, so that the worst-case discrepancy under local perturbations is minimized. This is formalized by bounding the Lipschitz constant under a weighted Euclidean distance. Assuming a linear mixture cure model, we establish optimal weights for matching-based estimation of cure and time effects.

\paragraph{Setup.} 
For a diagonal weight matrix $W=\mathrm{diag}(w_1,\ldots,w_p)$ with $w_j\ge 0$ and $\sum_{j=1}^p w_j=c>0$, define the weighted norm $\|\boldsymbol{v}\|_W=(\sum_{j=1}^p w_j v_j^2)^{1/2}$ and distance $d_W(\boldsymbol{x},\boldsymbol{x}')=\|\boldsymbol{x}-\boldsymbol{x}'\|_W$.  

Then the associated dual norm is $\|\boldsymbol{g}\|_{W^{-1}}=(\sum_{j=1}^p g_j^2/w_j)^{1/2}$, with the convention that $g_j^2/w_j=0$ if $w_j=0$ and $g_j=0$. 
For a target function $f(\cdot)$, its Lipschitz constant under $d_W$ is given by:
\begin{align}
    L(W) = \sup_{x\neq x'} \frac{|f(\boldsymbol{x})-f(\boldsymbol{x}')|}{d_W(\boldsymbol{x},\boldsymbol{x}')}.
\end{align}

\noindent\textbf{Theorem 2a (Optimal cure weights).}
For the cure probability $S(H\mid \boldsymbol{x},z)=\sigma(\beta_{z0}+\boldsymbol{\beta}_z^\top \boldsymbol{x})$, 
the Lipschitz constant satisfies 
\begin{align}
    L_{\text{cure}}(W) \le \tfrac14 \|\boldsymbol{\beta}_z\|_{W^{-1}}.
\end{align}
Under the equal-scale constraint $\sum_{j=1}^p w_j=c$, the bound is minimized by the \textit{cure weight matrix}
\begin{align}
     W_{\text{c},z}
    &= \mathrm{diag}\!\left(\frac{c\,|\beta_{z1}|}{\sum_{k=1}^p|\beta_{zk}|},\,
                           \ldots,\,
                           \frac{c\,|\beta_{zp}|}{\sum_{k=1}^p|\beta_{zk}|}\right)\\\notag
    &= \mathrm{diag}|\boldsymbol{\beta}_z|.
\end{align}


\medskip
\noindent\textbf{Theorem 2b (Optimal time weights).}
Let $\eta(\boldsymbol{x})=\lambda_0+\boldsymbol{\lambda}_z^\top \boldsymbol{x}$ and assume 
$(\log T\mid \boldsymbol{X}=\boldsymbol{x})\sim\mathcal N(\eta(\boldsymbol{x}),\sigma^2)$. Denote the non-cured expected event time by $m(\boldsymbol{x})=\mathbb E[T\mid T<H,\boldsymbol{X}=\boldsymbol{x}]$. Then there exists a finite constant $C_{\text{time}}$ 
such that
\begin{align}
    L_{\text{time}}(W) \le C_{\text{time}} \|\boldsymbol{\lambda}_z\|_{W^{-1}}.
\end{align}
Under the equal-scale constraint $\sum_{j=1}^p w_j=c$, the bound is minimized by the \textit{time weight matrix}
\begin{align}
    W_{\text{t},z} = \mathrm{diag}|\boldsymbol{\lambda}_z|.
\end{align}

In practice, the distance metric must be defined uniformly across both treatment and control groups, as matching requires a common rule. Although the averaged weights \ref{eq: distance metric} are not individually optimal within each arm, averaging provides a natural compromise. The resulting bias bound remains controlled (via the maximum across arms), and consistency is preserved.

\section{Simulation Experiments}

\subsection{Setup}

We simulate data using cure and time-to-event models, where treatment affects both cure probability and event timing. Effects are potentially heterogeneous, varying with covariates $\bm{X} \in \mathbb{R}^p$, which include $p_{\text{bin}}\le p$ binary and $p_{\text{cont}} \le p$ continuous features sampled from $\mathsf{Bern}(0.5)$ and $\mathsf{N}(0,1)$, respectively. Treatment assignment follows a logistic model: 
$P(Z = 1 \mid \bm{X}) = \text{expit}(\bm{X}^\top \beta_Z)$. Cure status is modeled separately for each treatment group:
$P(E=1 \mid Z=z, \bm{X}) = \text{expit}(\bm{X}^\top \beta_{E_z})$. Among non-cured individuals ($E=1$), the log event time follows: $(\log T \mid Z=z, \bm{X}) \sim \mathsf{N}(\bm{X}^\top \beta_{T_z}, 1)$, with right truncation at a finite horizon $H$. Independent right-censoring times are drawn uniformly from $[0, 1.5H]$, inducing nontrivial censoring rates. The observed time $Y = \min\{T, C, H\}$, with event indicator $\delta = \mathbbm{1}\{T < \min(C, H)\}$. The simulated dataset of 20,000 samples is split into a training set (7,000) and an estimation set (13,000).

The use of treatment group specific coefficients ($\beta_{E_1}$, $\beta_{E_0}$) and ($\beta_{T_1}$, $\beta_{T_0}$) allows for various treatment effect modifier setups to be explored. We consider four distinct simulation settings where treatment effect modifiers are present in: \textbf{1.} the cure model only; \textbf{2.} the time model only; \textbf{3.} both models on disjoint covariates; \textbf{4.} both models on overlapping covariates. Additional details are provided in supplementary materials.

\subsection{Comparison of Methods}
We compare out method to several matching and model-based estimators. Matching-based estimators include K-nearest neighbor (KNN) matching with Euclidean distance \citep{prijs2007nearest}, propensity score matching \citep{austin2014use, lu2018testing}, prognostic score matching based on Cox models \citep{anand1999evaluation}, outcome-based feature selection matching \citep{kroupa2019assessing}, and optimal full matching on the propensity score scale\citep{austin2015optimal, lee2020empirical}. As a non-matching baseline, we also fit separate Cox PH models within treatment groups \citep{cox1972regression, kalbfleisch2002statistical} to estimate group-specific survival functions for HTE estimation.

As benchmarks, we include \textbf{two oracle estimators}: 1. A full oracle with access to true model coefficients as weights, cure status, and event times under both arms, enabling optimal matching and HTE estimation without censoring. Its error reflects only intrinsic randomness from variability of Bernoulli and event time distribution. 2. A partial oracle that uses the true model parameters for matching but relies on observed, potentially censored outcomes. It reflects the best possible performance given known covariate importance, isolating model fitting error. Comparing the the oracle estimators and the proposed method allows us to decompose error into irreducible outcome variability and estimation error from model fitting and censoring.

For our method, in addition to separate matching using cure-based (for HTE on cure probability) or time-based weights (for HTEs on CMET), we also evaluate a combined version of weights that averages both. HTE estimation accuracy is evaluated using mean absoluate error (MAE) against ground-truth effects across 300 simulations, with standard deviations and distribution of individual-level errors for assessing robustness.

\vspace{-0.8em}
\subsection{Results}
\vspace{-0.4em}

\begin{table*}
    \centering
    \caption{MAE ($\pm$SD) of HTE Estimation on Cure Probability ($\times 100$) and CMET Across Four Settings ($K=50)$ }
    \resizebox{\textwidth}{!}{
    \begin{tabular}{l|cc|cc|cc|cc}
        \hline
        \textbf{Method} & \multicolumn{2}{c|}{\textbf{Cure Only}} & \multicolumn{2}{c|}{\textbf{Time Only}} & \multicolumn{2}{c|}{\textbf{Both+Indep}} & \multicolumn{2}{c}{\textbf{Both+Overlap}} \\
        & Cure & Time & Cure & Time & Cure & Time & Cure & Time \\
        \hline
        Oracle                & 6.6 $\pm$ 0.2 & 12.3 $\pm$ 1.3 & 6.3 $\pm$ 0.3 & 22.0 $\pm$ 1.8 & 6.7 $\pm$ 0.4 & 16.0 $\pm$ 1.9 & 6.8 $\pm$ 0.3 & 18.2 $\pm$ 1.5 \\
        Partial Oracle        & 7.7 $\pm$ 0.3 & 26.8 $\pm$ 1.5 & 7.6 $\pm$ 0.2 & 32.6 $\pm$ 1.5 & 8.0 $\pm$ 0.3 & 31.3 $\pm$ 1.8 & 8.3 $\pm$ 0.3 & 33.4 $\pm$ 1.5 \\
        \textbf{MCM KNN}      & 7.9 $\pm$ 0.3 & 26.8 $\pm$ 1.0 & 7.6 $\pm$ 0.2 & 33.3 $\pm$ 1.5 & 8.3 $\pm$ 0.3 & 31.4 $\pm$ 1.3 & 8.7 $\pm$ 0.3 & 38.5 $\pm$ 1.3 \\
        MCM KNN combined      & 8.0 $\pm$ 0.3 & 27.6 $\pm$ 1.1 & 7.8 $\pm$ 0.2 & 39.4 $\pm$ 1.6 & 8.7 $\pm$ 0.3 & 33.4 $\pm$ 1.2 & 8.9 $\pm$ 0.3 & 45.9 $\pm$ 1.5 \\
        Feature Selection KNN     & 8.2 $\pm$ 0.3 & 28.0 $\pm$ 1.1 & 8.0 $\pm$ 0.2 & 43.2 $\pm$ 1.8 & 9.4 $\pm$ 0.5 & 38.1 $\pm$ 2.5 & 9.4 $\pm$ 0.4 & 41.5 $\pm$ 2.6 \\
        Euclidean KNN         & 9.9 $\pm$ 0.3 & 29.8 $\pm$ 1.1 & 8.3 $\pm$ 0.2 & 48.7 $\pm$ 1.6 & 10.5 $\pm$ 0.4 & 52.1 $\pm$ 1.1 & 10.8 $\pm$ 0.4 & 55.6 $\pm$ 1.3 \\
        Propensity Score KNN  & 17.2 $\pm$ 0.2 & 37.1 $\pm$ 1.3 & 9.6 $\pm$ 0.4 & 91.9 $\pm$ 1.2 & 19.2 $\pm$ 0.2 & 70.3 $\pm$ 0.9 & 19.2 $\pm$ 0.2 & 88.7 $\pm$ 0.9 \\
        Prognostic Score KNN  & 13.6 $\pm$ 0.5 & 38.3 $\pm$ 1.8 & 8.4 $\pm$ 0.4 & 92.9 $\pm$ 1.6 & 14.5 $\pm$ 0.5 & 60.4 $\pm$ 2.1 & 15.4 $\pm$ 0.4 & 68.9 $\pm$ 2.4 \\
        Cox Model (no match)  & 5.2 $\pm$ 0.6 & 36.2 $\pm$ 2.2 & 8.2 $\pm$ 0.7 & 78.7 $\pm$ 3.1 & 7.7 $\pm$ 0.7 & 58.0 $\pm$ 1.3 & 8.6 $\pm$ 0.7 & 67.5 $\pm$ 1.5 \\
        \hline
    \end{tabular}
    }
    \label{tab:combined_mae}
\end{table*}

Table~\ref{tab:combined_mae} report the MAE when estimating HTEs on cure probability and conditional mean event time (CMET) across four simulation settings. Our method consistently yields MAE slightly above that of the full and partial oracle baselines. For instance, whereas the full oracle achieves MAEs around 6.5, our method typically ranges from 7.5 to 8.5. The small gap suggests most residual error, reflected by the oracle MAE, is irreducible due to data variability rather than estimation. The small difference between our approach and the partial oracle, reflecting the estimation error from the model fitting process, further indicates our learned distance metric closely approximates the optimal matching structure. Given the presence of censoring and unknown cure status in time-to-event data, these results indicate our method closely approximates optimal performance in these scenarios.

In comparison with other matching-based methods, our approach achieves substantially lower error across all settings. Standard Euclidean matching fails to distinguish relevant covariates, while propensity score and prognostic score matching do not perform well in estimating both the effects, even they address confounding. The improvement is most evident in settings where treatment effect modifiers influence both the cure and time components. In such cases, feature selection-based matching shows a notable increase in MAE. We also evaluate a combined-weights matching approach that averages the two optimal weights. Since the combined weights do not directly target either estimand, this approach is comparable in design to general-purpose matching strategies. However, it consistently performs slightly better than these baselines, highlighting the benefit of the learned weights even when not explicitly aligned with the estimands. We also include Cox proportional hazards models as a non-matching baseline. When treatment effect modifiers influence only the event probability, Cox performs well, but when modifiers affect event timing, Cox fails to capture heterogeneity and yields substantially higher errors, particularly in estimating time effects. These results suggest that Cox models perform well when their assumptions align with the underlying data, but their limitations become apparent when heterogeneity spans both event probability and timing.

\vspace{-0.8em}
\begin{figure}[htbp]
    \centering
    \includegraphics[width=0.7\columnwidth]{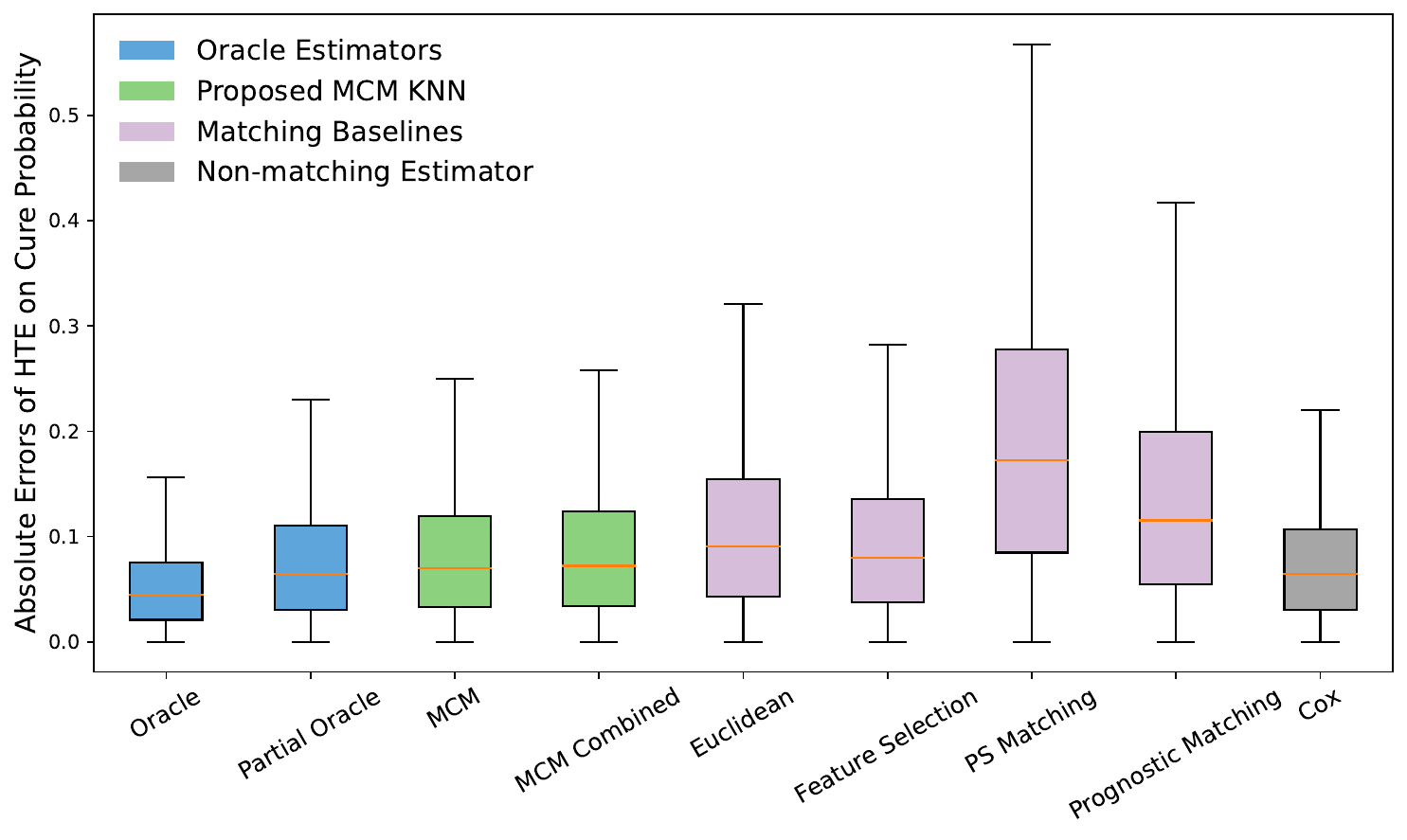}  
    \caption{Absolute HTE Estimation Error on Cure Probability.}
    \label{fig:hte_error_boxplot}
\end{figure}

\vspace{-0.8em}

Figure~\ref{fig:hte_error_boxplot} shows the distribution of absolute errors in estimating individual-level HTEs on cure probability, computed on the estimation set in a single simulation run. This setting involves effect modifiers in both the cure and time components, with disjoint covariates. As expected, oracle estimators achieve the lowest error and variability. Our proposed method (MCM-based matching) ranks closely behind the two oracle baselines.
Compared to Euclidean matching, which is unbiased but agnostic to outcome, our method yields lower mean error and variance. Performance is comparable to feature selection-based matching and the Cox model without matching, though the latter two methods are observed to vary more significantly across different simulation settings as shown in table~\ref{tab:combined_mae}.

Choosing $K$ in KNN matching trades off bias and variance. Smaller $K$ gives precise but variable estimates; larger $K$ improves stability but risks poor matches. We use $K=50$ to balance these factors in experiments. In practice, $K$ should be guided by cohort size, feature variability, and domain knowledge, since standard validation is not feasible. Additional experimental results are shown in the supplementary materials.

\section{Real-World Data Experiments}
To illustrate our method's real-world value, where assumptions may not hold and true models are unknown, we analyze a multi-center cohort of patients with acute lymphoblastic leukemia who received allogeneic stem cell transplantation\citep{ma2021integrative}. The treatment of interest compares haploidentical transplantation (Haplo-SCT) versus matched sibling donor transplantation (MSDT). A total of 1157 patients are included, with follow-up time up to 4193 days.

\vspace{-0.8em}
\subsection{Study Design and Setup}
\vspace{-0.3em}
Among four recorded outcomes, we focus on \textbf{leukemia-free survival (LFS)}, which reflects both relapse and death and serves as a proxy for long-term cure. We define cure at $H = 10$ years (3650 days). The treatment groups are imbalanced, with 911 patients in the treated group and 246 in the control group (approximately 78\% vs 22\%). Censoring rates are 18.1\% and 25.2\% respectively. Baseline covariates include demographics, donor-recipient characteristics, minimal residual disease status (MRD), HLA mismatch, and complete remission status, totaling 8-9 key predictors.We used 35\% of the data for training the mixture cure model and 65\% for HTE estimation, with $K = 35$ neighbors used in matching.

\vspace{-0.8em}
\subsection{HTE Estimation Results}
\vspace{-0.3em}

%

The estimated HTE distributions vary substantially across matching methods, as shown in Figure~\ref{fig:leukemia_cure_violin}.
The proposed method (MCM) produces a concentrated and symmetric distribution of estimated HTEs, with a mean around 0.05 and an interquartile range (IQR) that aligns with the empirical difference in cure probabilities between two arms.  Euclidean distance-based matching yields a symmetric but highly dispersed distribution, while propensity score (PS) matching produces tightly concentrated estimates with limited variability. Prognostic-score matching results in a bimodal and partially negative distribution, potentially due to misalignment between the prognostic model and the cure-related outcome. Feature selection-based matching produces skewed and unstable estimates. 



Analysis of pairwise correlation of estimated HTEs on cure probability shows that the separate-weight MCM and combined-weight MCM variants exhibit strong positive correlation (0.799), 
whereas correlation between separate-weight MCM and Euclidean matching is positive but moderate (0.346), suggesting the methods yield directionally aligned but distinct estimates.

Overall our results suggest that Haplo-SCT improves LFS rates by approximately 5\% on average compared to MSDT, which is clinically meaningful but more conservative than estimates from comparator methods. The difference may arise because our method is designed to estimate distinct effects on cure probability and survival time, unlike the comparators. As a result, comparator methods may conflate treatment effects on LFS probability with effects on short-term survival times, which are also important but clinically distinct. Although ground-truth HTEs are not available in this real-world setting, our theoretical and simulation results strongly suggest that our approach is likely to be more accurate in this dataset due to the presence of a large cured population.

\begin{figure}[htbp]
\centering
\includegraphics[width=0.7\linewidth]{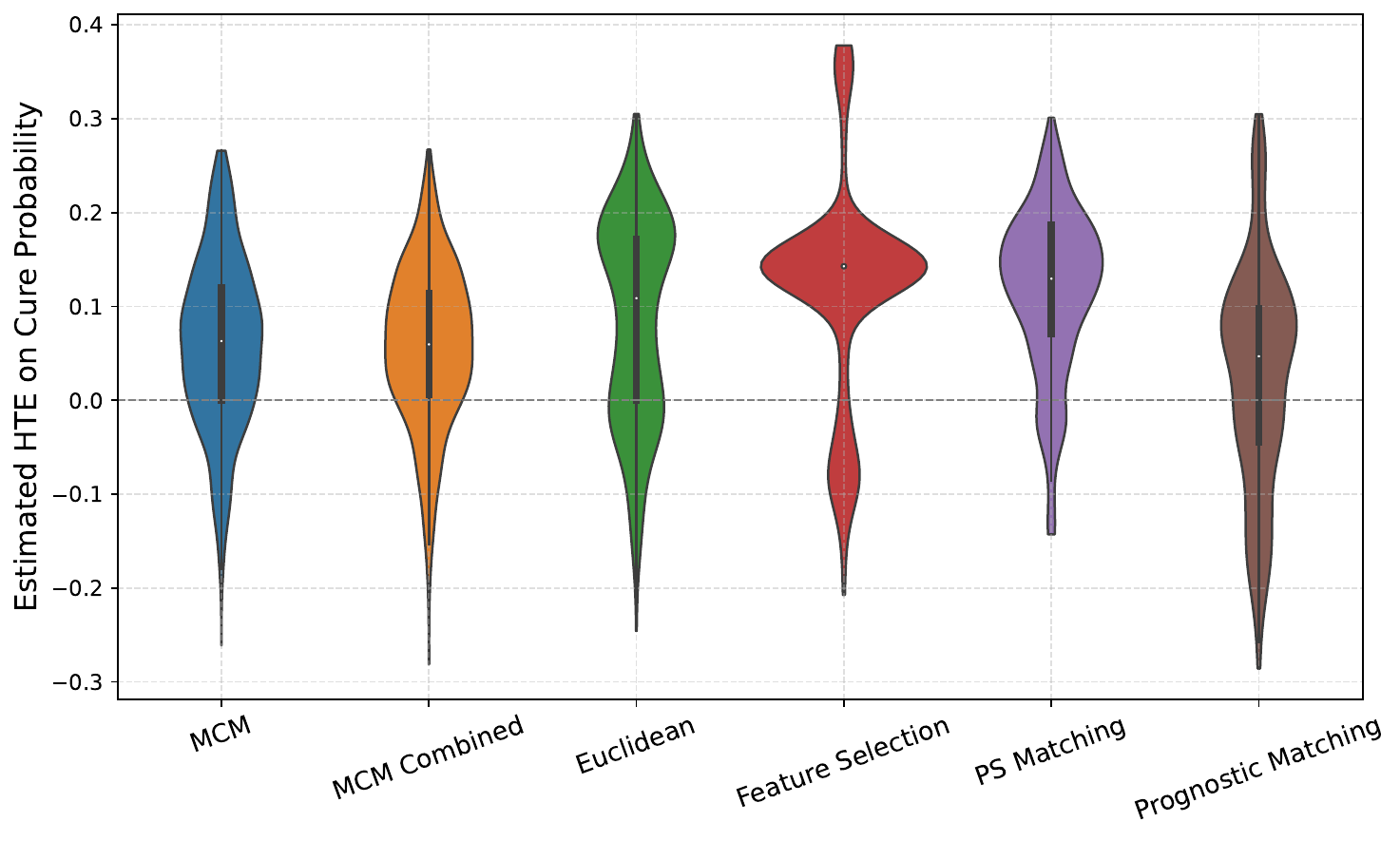}
\caption{
Distribution of Estimated HTEs on Cure Probability under Leukemia-Free Survival (LFS). 
}
\label{fig:leukemia_cure_violin}
\end{figure}
\vspace{-1em}

\section{Conclusion}
We propose a novel matching framework for separately estimating HTE on cure probability and event timing, leveraging covariate-specific distances derived from a mixture cure model. Our method is theoretically grounded, achieves consistent estimation via matched Kaplan-Meier curves, and performs well across simulations. However, it requires further validation on diverse real-world datasets and may remain sensitive to model misspecification despite incorporating parametric and non-parametric elements.

\bibliographystyle{apalike}
\bibliography{ref}

\newpage
\appendix
\onecolumn

\begin{center}
    {\LARGE \textbf{Supplementary Materials}}
\end{center}
\vspace{1em}

\section{MISSING PROOFS}

\subsection{Proof of Proposition 1}
In this section, we present the assumptions and detailed proof of \textbf{Proposition 1}.

\begin{proposition}\label{prop:identification}
    If the consistency, unconfoundedness, positivity and non-informative censoring assumptions hold, then the two estimands can be expressed as
    \begin{equation}\label{eq:cure-estimand}
        \pi(\bm{x})  = S_1(H \mid \bm{x}) - S_0(H \mid \bm{x}),
    \end{equation}
    \begin{equation}\label{eq:cmst-estimand}
        \begin{aligned}
            \Delta(\bm{x}) &= \frac{\int_0^H S_1(t|\bm{x}) \, dt -H S_1(H|\bm{x})}{1 - S_1(H|\bm{x})} \\
                      &- \frac{\int_0^H S_0(t|\bm{x}) \, dt -H S_0(H|\bm{x})}{1 - S_0(H|\bm{x})}.
        \end{aligned}        
    \end{equation}
\end{proposition}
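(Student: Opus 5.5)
The argument has two stages. First, express both estimands through the potential-outcome survival functions $\Pr(T(z)>t\mid \bm{X}=\bm{x})$. Then use the causal assumptions to replace these with the observable conditional survival functions $S_z(t\mid\bm{x})=\Pr(T>t\mid Z=z,\bm{X}=\bm{x})$.

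\textbf{Identification of the survival functions.} Consistency gives $T=T(z)$ on the event $\{Z=z\}$. Unconfoundedness, $T(z)\indep Z\mid \bm{X}$, then yields
\[
\Pr(T(z)>t\mid\bm{X}=\bm{x})=\Pr(T(z)>t\mid Z=z,\bm{X}=\bm{x})=\Pr(T>t\mid Z=z,\bm{X}=\bm{x}).
\]
Positivity ensures the conditioning event has positive probability, so this quantity is well defined. Non-informative censoring ($T\indep C\mid Z,\bm{X}$) is what makes $S_z(t\mid\bm{x})$ a function of the observed-data law on $[0,H]$. Concretely, the hazard of $T$ equals the cause-specific hazard of observed events, $dF^{\delta=1}(t\mid z,\bm{x})/\Pr(Y\ge t\mid z,\bm{x})$. $S_z$ is recovered from it by the product-integral, valid on $[0,H]$ provided $\Pr(C\ge H\mid z,\bm{x})>0$, which I would fold into positivity. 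I expect this to be the main technical point, and I would state it as a standard lemma rather than re-derive it.

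\textbf{The cure estimand.} With $S_z$ identified, $\pi(\bm{x})$ is immediate. Since $E(z)=0$ iff $T(z)\ge H$,
\[
\Pr(E(z)=0\mid\bm{x})=\Pr(T(z)>H\mid\bm{x})=S_z(H\mid\bm{x}),
\]
assuming $T$ is continuous so that $\Pr(T(z)=H)=0$. Taking the difference over $z$ gives the formula for $\pi(\bm{x})$ in the proposition.

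\textbf{The CMET estimand.} For $\Delta(\bm{x})$, write
\[
\mathbb{E}[T(z)\mid T(z)<H,\bm{x}]=\frac{\mathbb{E}[T(z)\mathbb{I}(T(z)<H)\mid\bm{x}]}{1-S_z(H\mid\bm{x})},
\]
which requires $S_z(H\mid\bm{x})<1$. For the numerator, use the layer-cake identity $\mathbb{E}[\min(T,H)]=\int_0^H\Pr(T>t)\,dt$ together with
\[
\min(T,H)=T\,\mathbb{I}(T<H)+H\,\mathbb{I}(T\ge H).
\]
These give
\[
\mathbb{E}[T(z)\mathbb{I}(T(z)<H)\mid\bm{x}]=\int_0^H S_z(t\mid\bm{x})\,dt-H\,S_z(H\mid\bm{x}).
\]
An equivalent route is integration by parts on $-\int_0^H t\,dS_z(t)$. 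Substituting the identified $S_z$ for $z=0,1$ and subtracting yields the formula for $\Delta(\bm{x})$.
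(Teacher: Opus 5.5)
Your proposal is correct and follows the paper's argument: consistency and unconfoundedness turn $\Pr(T(z)>t\mid\bm{x})$ into $S_z(t\mid\bm{x})$, and the truncated mean comes from integrating $\int_0^H t\,dS_z(t\mid\bm{x})$ by parts, which your layer-cake identity reproduces. Your extra steps go beyond the paper but tighten rather than change the proof: you recover $S_z$ from the law of $(Y,\delta)$ by the product-integral under non-informative censoring with $\Pr(C\ge H\mid z,\bm{x})>0$, and you make explicit that $\Pr(T(z)=H)=0$ and $S_z(H\mid\bm{x})<1$. The paper never actually invokes the censoring assumption and silently equates $\Pr(T(z)\ge H)$ with $\Pr(T(z)>H)$.
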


\textbf{Proof.}
Suppose we have the following assumptions:
\begin{assumption}[Consistency]\label{ass:consistency}
    For all $i$, 
    \begin{align}
        T_i &= Z_i T_i(1) + (1 - Z_i) T_i(0) \quad \\
        C_i &= Z_i C_i(1) + (1 - Z_i) C_i(0).
    \end{align}
\end{assumption}

\begin{assumption}[Unconfoundedness]\label{ass:unconfoundedness}
    \begin{equation}
        \{T_i(1), T_i(0), C_i(1), C_i(0)\} \indep Z_i \mid \bm{X}_i.
    \end{equation}
\end{assumption}

\begin{assumption}[Positivity]\label{ass:positivity}
    For all $x$, 
    \begin{equation}
        0 < \Pr(Z_i = 1 \mid \bm{X}_i = \bm{x}) < 1
    \end{equation}
\end{assumption}

\begin{assumption}[Non-informative censoring]\label{ass:non-informative-censoring}
    \begin{equation}
        \{T_i(1), T_i(0)\} \indep \{C_i(1), C_i(0)\} \mid \bm{X}_i
    \end{equation}
\end{assumption}

The proof of \textbf{Proposition 1} can be shown as below:
\begin{equation}
\begin{aligned}
            \pi(\bm{x}) 
        &= \Pr(E_i(1)=0 \mid \bm{X} = \bm{x}) - \Pr(E_i(0)=0 \mid \bm{X} = \bm{x}) \\
        &= \Pr(T(1) > H \mid \bm{X} = \bm{x}) - \Pr(T(0) > H \mid \bm{X} = \bm{x}) \\
        &= \Pr(T(1) > H \mid Z = 1, \bm{X} = \bm{x}) - \Pr(T(0) > H \mid Z = 0, \bm{X} = \bm{x}) \\
        &= \Pr(T > H \mid Z = 1, \bm{X} = \bm{x}) - \Pr(T > H \mid Z = 0, \bm{X} = \bm{x}) \\
        &= S_1(H \mid \bm{x}) - S_0(H \mid \bm{x}), \\
\end{aligned}
\end{equation}

Since we can derive the equation below: 
\begin{equation}\label{eq:time expect}
    \begin{aligned}
        E[T_i(z) \mid T_i(z) < H, \bm{X} = \bm{x}] 
        &= \int t \cdot p(T = t \mid T < H, Z = z, \bm{X} = \bm{x}) \, dt \\
        &= \int t \cdot \frac{p(T = t, T < H\mid Z = z, \bm{X} = \bm{x})}{p(T < H\mid Z = z, \bm{X} = \bm{x})} \, dt \\
        &= \int t \cdot \frac{p(T = t\mid Z = z, \bm{X} = \bm{x}) \cdot p(T < H \mid T = t, Z = z, \bm{X} = \bm{x})}{p(T < H\mid Z = z, \bm{X} = \bm{x})} \, dt \\
        &= \frac{\int_0^H t \cdot p(T = t\mid Z = z, \bm{X} = \bm{x}) \, dt}{\int_0^H p(T = t\mid Z = z, \bm{X} = \bm{x}) \, dt} \\
        &= \frac{\int_0^H t \, dS_z(t \mid \bm{x})}{\int_0^H dS_z(t \mid \bm{x})} \\
        &= \frac{-H S_z(H \mid \bm{x}) + \int_0^H S_z(t \mid \bm{x}) \, dt}{1 - S_z(H \mid \bm{x})}
    \end{aligned}
\end{equation}

then the second estimand can be expressed by:
\begin{equation}
\begin{aligned}
    \Delta(\bm{x}) 
        &= \mathbb{E}[T_i(1) \mid T_i(1) < H, \bm{X} = \bm{x}] - \mathbb{E}[T_i(0) \mid T_i(0) < H, \bm{X} = \bm{x}] \\
        &= \frac{\int_0^H S_1(t|\bm{x}) \, dt -H S_1(H|\bm{x})}{1 - S_1(H|\bm{x})} 
                      - \frac{\int_0^H S_0(t|\bm{x}) \, dt -H S_0(H|\bm{x})}{1 - S_0(H|\bm{x})}
\end{aligned}
\end{equation}

\subsection{Mixture Cure Model}
With a pre-specified time horizon $H$, the survival function modeled by a mixture cure approach can be written as: 
\begin{equation}
    \begin{aligned}
        S(t) &= \Pr(cure) + [1-\Pr(cure)]S_N(t) \\
        &= \Pr(E=0) + [1 - \Pr(E=0)] \Pr( T>t | T \leq H) \\
        &= \Pr(T>H) + \Pr(T \leq H) \Pr( T>t | T \leq H) \\
        &= \Pr(T>t)
    \end{aligned}
\end{equation}
where $\Pr(cure)$ is the probability of being cured, and $S_N(t)$ is the survival function given the patient is non-cured.

\textbf{Likelihood.}
Let $T$ denote the true event time, $C$ the censoring time, and $Y = \min(T, C, H)$ the observed time under a fixed horizon $H$ with event indicator $\delta = \mathbb{I}(T < C, T < H)$.
The covariates are denoted by $\boldsymbol{x}$, and the latent cure indicator by 
$E = \mathbb{I}(T < H)$, where $E = 1$ indicates a non-cured individual and $E = 0$ indicates a cured individual.

We denote:
\begin{itemize}
    \item $f_T(t \mid \boldsymbol{x})$ : the probability density function (PDF) of event time $T$ conditional on covariates $\boldsymbol{x}$;
    \item $F_T(t \mid \boldsymbol{x}) = \Pr(T \le t \mid \boldsymbol{x})$ : the cumulative distribution function (CDF) of $T$;
    \item $S_T(t \mid \boldsymbol{x}) = 1 - F_T(t \mid \boldsymbol{x}) = \Pr(T > t \mid \boldsymbol{x})$ : the survival function of $T$;
    \item $f_C(c \mid \boldsymbol{x})$ : the density function of the censoring time $C$ given $\boldsymbol{x}$;
    \item $S_C(c \mid \boldsymbol{x}) = \Pr(C > c \mid \boldsymbol{x})$ : the survival function of $C$.
\end{itemize}

The density(likelihood) of data $p(y,\delta|\bm{x})$ can be summarized as follows:
\begin{equation}\label{eq: density}
    \begin{aligned}
        p(y, &\delta | \bm{x}) 
        = p(y, \delta=1|\bm{x}) + p(y, \delta=0|\bm{x}) \\
        &= p(E=1|\bm{x}) \mathbb{I}(y<H) \frac{f_T(y|\bm{x})}{F_T(H|\bm{x})} S_C(y|\bm{x}) 
        + p(E=1|\bm{x}) \mathbb{I}(y<H) \left( 1 - \frac{F_T(y|\bm{x})}{F_T(H|\bm{x})} \right) f_C(y|\bm{x}) \\
        &+ p(E=0|\bm{x}) \mathbb{I}(y<H) f_C(y|\bm{x}) 
        + p(E=0|\bm{x}) \mathbb{I}(y=H) S_C(y|\bm{x}) 
    \end{aligned}    
\end{equation}

\textbf{Proof.}
For the first term in ~\ref{eq: density},
\begin{equation}
\begin{aligned}
        p(y, \delta=1 | \bm{x}) 
    &= p(y, \delta=1, E=1 | \bm{x}) + p(y, \delta=1, E=0 | \bm{x}) \\
    &= p(E=1|\bm{x}) p(y, \delta=1 | \bm{x}, E=1) + 0 \\
    &= p(E=1|\bm{x}) p(T=y, y<H, y<C | \bm{x}, T<H) \\
    &= p(E=1|\bm{x}) \mathbb{I}(y<H) p(T=y, y<C | \bm{x}, T<H) \\
    &= p(E=1|\bm{x}) \mathbb{I}(y<H) p(T=y| T<H, \bm{x}) p(C>y | \bm{x}, T<H) \\
    &= p(E=1|\bm{x}) \mathbb{I}(y<H) \frac{p(T=y, T<H |\bm{x})}{p(T<H|\bm{x})} p(C>y | \bm{x}) \\
    &= p(E=1|\bm{x}) \mathbb{I}(y<H) \frac{p(T=y |\bm{x})}{p(T<H|\bm{x})} p(C>y | \bm{x}) \\
    &= p(E=1|\bm{x}) \mathbb{I}(y<H) \frac{f_T(y |\bm{x})}{F_T(H|\bm{x})} S_C(y|\bm{x})
\end{aligned}
\end{equation}

and for the second term in ~\ref{eq: density},
\begin{equation}\label{eq:density_second}
\begin{aligned}
    p(y, \delta=0, | \bm{x}) 
    &= p(y, \delta=0, E=1 | \bm{x}) + p(y, \delta=0, E=0 | \bm{x}) \\
    &= p(E=1 | \bm{x})p(y, \delta=0 | \bm{x}, E=1) + p(E=0 | \bm{x})p(y, \delta=0 | \bm{x}, E=0) \\
    &= p(E=1 | \bm{x})p(y, \delta=0 | \bm{x}, T<H) + p(E=0 | \bm{x})p(y, \delta=0 | \bm{x}, T\geq H)
\end{aligned}
\end{equation}

For the $p(y, \delta=0 | \bm{x}, T<H)$ in the first term of ~\ref{eq:density_second}:
\begin{align}
    p(y, \delta=0 | \bm{x}, T<H) 
    &= p(C=y, T>y, H>y | \bm{x}, T<H) + p(H=y, T\geq y, C\geq y | \bm{x}, T<H) \\
    &= \mathbb{I}(y<H)\frac{p(T>y, T<H |\bm{x})}{p(T<H|\bm{x})}p(C=y|\bm{x}) + \mathbb{I}(y=H)p(T\geq H, C \geq H | \bm{x}, T<H) \\
    &= \mathbb{I}(y<H)\left( 1 - \frac{F_T(y|\bm{x})}{F_T(H|\bm{x})}\right) f_C(y|\bm{x}) + 0
\end{align}

For the $p(y, \delta=0 | \bm{x}, T\geq H)$ in the second term of ~\ref{eq:density_second}:
\begin{equation}
\begin{aligned}
    p(y, \delta=0 | \bm{x}, T\geq H) 
    &= p(T > y, C = y, H > y \mid \boldsymbol{x}, E = 0) + p(T \geq y, C \geq y, H = y \mid \boldsymbol{x}, E = 0) \\
    &= \mathbb{I}(y < H) \,
    \frac{p(T > y, T \geq H \mid \boldsymbol{x})}{p(T \geq H \mid \boldsymbol{x})} \,
    p(C = y \mid \boldsymbol{x}) 
    + \mathbb{I}(y = H) \frac{p(T \geq y, T \geq H \mid \boldsymbol{x})}{p(T \ge H \mid \boldsymbol{x})}p(C \geq y \mid \boldsymbol{x})  \\
    &= \mathbb{I}(y < H) f_C(y \mid \boldsymbol{x}) + \mathbb{I}(y = H) S_C(y \mid \boldsymbol{x})
\end{aligned}
\end{equation}

By plugging in the terms derived above, we can obtain the likelihood function as states in ~\ref{eq: density}.

\subsection{Proof of Theorem 1}
\textbf{Setup.}
Let $S(t \mid \boldsymbol{x}) := \Pr(T > t \mid \boldsymbol{X} = \boldsymbol{x})$ be the conditional survival curve given covariate $\boldsymbol{x}$. Let $\boldsymbol{X}_1, \dots, \boldsymbol{X}_n$ be units drawn i.i.d. from a distribution $P$. For each $i \in \{1, 2, \dots, n\}$, let $T_i$ be the random event time with survival function $S(t \mid \boldsymbol{X}_i)$ and let $C_i$ be the random censor time independent of $T_i$. Define the observed time $Y_i = \min\{T_i, C_i\}$ and the event indicator $\delta_i = \mathbb{I}(T_i < C_i)$.

Consider a matching group of size $m$, $\{\boldsymbol{X}_{k_1}, \dots, \boldsymbol{X}_{k_m}\} \subset \{\boldsymbol{X}_1, \dots, \boldsymbol{X}_n\}$, selected to estimate $S(t \mid \boldsymbol{x})$. Conventionally, these $m$ units are selected such that they have the closest distance to $\boldsymbol{x}$ among all units under a certain distance metric $d(\cdot, \cdot)$. Let $\widehat{S}_{mn}(t)$ be the Kaplan-Meier estimate with $\{(Y_{k_i}, \delta_{k_i})\}_{i=1}^m$.

Define two functionals
$$F(S) := S(H), \quad G(S) := \int_0^H S(u)\,\mathrm{d}u.$$ The following theorem establishes that, under certain conditions, $F(\widehat{S}_{mn})$ is consistent to $F(S(\cdot \mid \boldsymbol{x}))$ and $G(\widehat{S}_{mn})$ is consistent to $G(S(\cdot \mid \boldsymbol{x}))$.

\textbf{Theorem 1.}
Let $\boldsymbol{x}$ be in the support of $\bm{X}$, \emph{i.e.}, for any $\epsilon > 0$, $P(d(\boldsymbol{X}, \boldsymbol{x}) < \epsilon) > 0$. Assume $S(t\mid \boldsymbol{x})$ is locally Lipschitz in $\boldsymbol{x}$ uniformly in $t$, \emph{i.e.}, there exists a neighborhood of $\boldsymbol{x}$, denoted by $\mathcal{N}_{\boldsymbol{x}}$, and a constant $\tau > H$ such that for all $\boldsymbol{x}', \boldsymbol{x}''$ in $\mathcal{N}_{\boldsymbol{x}}$ and $t \in [0, \tau]$, $|S(t \mid \boldsymbol{x}') - S(t \mid \boldsymbol{x}'')| \leq L \cdot d(\boldsymbol{x}', \boldsymbol{x}'')$. Then, 
    $$F(\widehat{S}_{mn}) \stackrel{p}{\rightarrow} F(S(\cdot \mid \boldsymbol{x})), \quad G(\widehat{S}_{mn}) \stackrel{p}{\rightarrow} G(S(\cdot \mid \boldsymbol{x})),$$
    as $m, n \rightarrow \infty$ with $m = o(n)$.

\textbf{Corollary 1.}
The two estimators defined in Section 4.3 as shown below, which are functions of $F$ and $G$, are also consistent.

\begin{align}\label{eq:hte_cure_estimator}
    \pi(\mathbf{x}_i) = \hat S_{M_1}(H) - \hat S_{M_0}(H),
\end{align}
\begin{align}
    \Delta(\bm{x}_i) = \frac{\int_0^H \hat S_{M_1}(t) \, dt -H \hat S_{M_1}(H)}{1 - \hat S_{M_1}(H)} - \frac{\int_0^H \hat S_{M_0}(t) \, dt -H \hat S_{M_0}(H)}{1 - \hat S_{M_0}(H)}
\end{align}

\textbf{Proof.}
We will prove \textbf{Theorem 1} via three lemmas. The first lemma shows that the matching group gets closer to $\boldsymbol{x}$ as sample size increases.

\textbf{Lemma 1.}
    Let $d_{mn} = \max_{i} d(\boldsymbol{X}_{k_i}, \boldsymbol{x})$, where $\boldsymbol{x}$ is in the support of $\bm{X}$. Then, $d_{mn} \stackrel{p}{\rightarrow} 0.$

\textbf{Proof of Lemma 1.}
    Define $d_i = d(\boldsymbol{X}_i, \boldsymbol{x})$. Then, $d_{mn}$ is the $m$-th order statistic of $d_1, \dots, d_n$, denoted by $d_{(m)}$. Let $Q(c) = P(d(\boldsymbol{X}, \boldsymbol{x}) \leq c)$ be the CDF of the distribution of $d(\boldsymbol{X}, \boldsymbol{x})$. For any $\alpha \in (0, 1)$, the sample quantile $d_{(\lfloor \alpha n \rfloor)}$ converges in probability to the $\alpha$-quantile $Q^{-1}(\alpha)$, because the sample quantile $d_{(\lfloor \alpha n \rfloor)}$ is an $M$-estimator with estimating equation $\psi_i(c) = \mathbb{I}(d_i \leq c) - \alpha$.
    Therefore, $d_{(\lfloor \alpha n \rfloor)} = Q^{-1}(\alpha) + o_p(1)$ as $n \rightarrow \infty$. Since $m = o(n)$, for any $\alpha \in (0, 1)$, $m < \alpha n$ when $n$ is sufficiently large. Therefore, $d_{(m)} \leq d_{(\lfloor \alpha n \rfloor)} = Q^{-1}(\alpha) + o_p(1)$ as $n \rightarrow \infty$. Since $\alpha$ is arbitrarily chosen, and $\lim_{\alpha \rightarrow 0} Q^{-1}(\alpha) = 0$ because $\boldsymbol{x}$ is in the support of $\bm{X}$, we have $d_{(m)} = o_p(1)$ which yields $d_{mn} \stackrel{p}{\rightarrow} 0$.

\vspace{1em}
The second lemma shows that the Kaplan--Meier estimator is strongly consistent with the mixture survival function, 
as established in \citet{Stute1993}.

\textbf{Lemma 2.}
    Define $\bar{S}(t) = \frac{1}{m}\sum_{i=1}^m S(t \mid \boldsymbol{X}_{k_i})$. Then, 
    $$\sup_{0 \leq t \leq \tau} |\widehat{S}_{mn}(t) - \bar{S}(t)| \stackrel{p}{\rightarrow} 0$$ as $m, n \rightarrow \infty$ with $m = o(n)$.

\vspace{1em}

The third lemma shows that the mixture survival function $\bar{S}(t)$ does not deviate far from $S(t \mid \boldsymbol{x})$ as long as the covariates $\boldsymbol{X}_{k_i}$ are close to $\boldsymbol{x}$.

\textbf{Lemma 3.}
    If $\boldsymbol{x}$ is in the support of $\bm{X}$ and $S(t\mid \boldsymbol{x})$ is locally Lipschitz uniformly in $t$, then $\bar{S}(t)$ is strongly consistent to $S(t \mid \boldsymbol{x})$ on $[0, \tau]$.

\textbf{Proof of Lemma 3.}
    $$|\bar{S}(t) - S(t \mid \boldsymbol{x})| \leq \frac{1}{m}\sum_{i=1}^m\left|S(t\mid \boldsymbol{X}_{k_i}) - S(t \mid \boldsymbol{x})\right|.$$
    Furthermore, when $d(\boldsymbol{X}_{k_i}, \boldsymbol{x}) < \epsilon$, $$\left|S(t\mid \boldsymbol{X}_{k_i}) - S(t \mid \boldsymbol{x})\right| \leq L d(\boldsymbol{X}_{k_i}, \boldsymbol{x}).$$

    Therefore, for any $\delta  > 0$,
    \begin{align}
        P\left(|\bar{S}(t) - S(t \mid \boldsymbol{x})| \geq \delta\right) &\leq P\left(\frac{1}{m}\sum_{i=1}^m\left|S(t\mid \boldsymbol{X}_{k_i}) - S(t \mid \boldsymbol{x})\right| \geq \delta\right) \\
        &\leq P\left(\frac{1}{m}\sum_{i=1}^m\left|S(t\mid \boldsymbol{X}_{k_i}) - S(t \mid \boldsymbol{x})\right| \geq \delta, d_{mn} < \epsilon\right) + P(d_{mn} \geq \epsilon) \\
        &\leq P\left(\frac{1}{m}\sum_{i=1}^m\left|S(t\mid \boldsymbol{X}_{k_i}) - S(t \mid \boldsymbol{x})\right| \geq \delta \mid d_{mn} < \epsilon\right) + P(d_{mn} \geq \epsilon) \\
        &\leq P(Ld_{mn} \geq \delta \mid d_{mn} < \epsilon) + P(d_{mn} \geq \epsilon) \\
        &\leq P(d_{mn} \geq \delta / L) + P(d_{mn}\geq \epsilon).
    \end{align}

    Since $d_{mn}$ converges in probability to 0, the above probability converges to 0. Therefore, $\bar{S}(t)$ is strongly consistent to $S(t \mid \boldsymbol{x})$ on $[0, \tau]$.

\vspace{1em}
\textbf{Proof of the main theorem}
    With the three lemmas, $\widehat{S}_{mn}(t)$ is strongly consistent to $S(t \mid \boldsymbol{x})$ under the conditions of Theorem 1. By the triangle inequality, 
$$
\left| \widehat{S}_{mn}(t) - S(t \mid \boldsymbol{x}) \right|
\leq \left| \widehat{S}_{mn}(t) - \bar{S}(t) \right| + \left| \bar{S}(t) - S(t \mid \boldsymbol{x}) \right|.
$$

Therefore, for any \( \epsilon > 0 \),
$$
\mathbb{P}\left( \left| \widehat{S}_{mn}(t) - S(t \mid \boldsymbol{x}) \right| > \epsilon \right)
\leq \mathbb{P}\left( \left| \widehat{S}_{mn}(t) - \bar{S}(t) \right| > \frac{\epsilon}{2} \right)
+ \mathbb{P}\left( \left| \bar{S}(t) - S(t \mid \boldsymbol{x}) \right| > \frac{\epsilon}{2} \right).
$$

Since both terms on the right converge to zero as \( n \to \infty \), we conclude:
\[
\mathbb{P}\left( \left| \widehat{S}_{mn}(t) - S(t \mid \boldsymbol{x}) \right| > \epsilon \right) \to 0,
\]
which implies
\[
\widehat{S}_{mn}(t) \xrightarrow{P} S(t \mid \boldsymbol{x}).
\]

\vspace{1em}
\textbf{As a special case,} $\widehat{S}_{mn}(H)$ is consistent to $S(H \mid \boldsymbol{x})$. This proves that $F(\widehat{S}_{mn}) \stackrel{p}{\rightarrow} F(S(\cdot \mid \boldsymbol{x}))$.

Furthermore, 
$$|G(\widehat{S}_{mn}) - G(S(\cdot \mid \boldsymbol{x}))| = \left|\int_0^H \widehat{S}_{mn}(u) - S(u \mid \boldsymbol{x})\,\mathrm{d} u\right| \leq H \sup_{t \in [0, H]} |\widehat{S}_{mn}(t) - S(t \mid \boldsymbol{x})| \stackrel{p}{\rightarrow} 0.$$
This completes the proof that $G(\widehat{S}_{mn}) \stackrel{p}{\rightarrow} G(S(\cdot \mid \boldsymbol{x}))$.

\subsection{Proof of Theorem 2a and 2b}

\paragraph{Setup.} 
For a diagonal weight matrix $W=\mathrm{diag}(w_1,\ldots,w_p)$ with $w_j\ge 0$ and $\sum_{j=1}^p w_j=c>0$, define the weighted norm $\|\boldsymbol{v}\|_W=(\sum_{j=1}^p w_j v_j^2)^{1/2}$ and distance $d_W(\boldsymbol{x},\boldsymbol{x}')=\|\boldsymbol{x}-\boldsymbol{x}'\|_W$.  

Then the associated dual norm is $\|\boldsymbol{g}\|_{W^{-1}}=(\sum_{j=1}^p g_j^2/w_j)^{1/2}$, with the convention that $g_j^2/w_j=0$ if $w_j=0$ and $g_j=0$. 
For a target function $f(\cdot)$, its Lipschitz constant under $d_W$ is given by:
\begin{align}
    L(W) = \sup_{x\neq x'} \frac{|f(\boldsymbol{x})-f(\boldsymbol{x}')|}{d_W(\boldsymbol{x},\boldsymbol{x}')}.
\end{align}

\noindent\textbf{Theorem 2a.}
For the cure probability $S(H\mid \boldsymbol{x},z)=\sigma(\beta_{z0}+\boldsymbol{\beta}_z^\top \boldsymbol{x})$, 
the Lipschitz constant satisfies 
\begin{align}
    L_{\text{cure}}(W) \le \tfrac14 \|\boldsymbol{\beta}_z\|_{W^{-1}}.
\end{align}
Under the equal-scale constraint $\sum_{j=1}^p w_j=c$, the bound is minimized by the \textit{cure weight matrix}
\begin{align}
     W_{\text{c},z}
    &= \mathrm{diag}\!\left(\frac{c\,|\beta_{z1}|}{\sum_{k=1}^p|\beta_{zk}|},\,
                           \ldots,\,
                           \frac{c\,|\beta_{zp}|}{\sum_{k=1}^p|\beta_{zk}|}\right)\\\notag
    &= \mathrm{diag}|\boldsymbol{\beta}_z|.
\end{align}


\textbf{Proof of Theorem 2a.}
The gradient of the cure function is 
\begin{align}
    \nabla_{\boldsymbol{x}} S(H\mid \boldsymbol{x}, z)
    = \sigma'(\beta_{z0}+\boldsymbol{\beta}_z^\top \boldsymbol{x})\,\boldsymbol{\beta}_z,
    \qquad 
    \sigma'(u)=\sigma(u)\{1-\sigma(u)\}\le \tfrac14.
\end{align}

Hence, 
\begin{align}
    \|\nabla_{\boldsymbol{x}} S(H\mid \boldsymbol{x}, z)\|_{W^{-1}}
    \le \tfrac14\,\|\boldsymbol{\beta}_z\|_{W^{-1}}
    = \tfrac14\Big(\sum_{j=1}^p \tfrac{\beta_{zj}^2}{w_j}\Big)^{1/2}.
\end{align}

By the dual-norm inequality (mean value theorem combined with Hölder),
\begin{align}
    |S(H\mid \boldsymbol{x}, z)-S(H\mid \boldsymbol{x}', z)|
    \le \sup_{\boldsymbol{x}} \|\nabla_{\boldsymbol{x}} S(H\mid \boldsymbol{x}, z)\|_{W^{-1}}\,
       d_W(\boldsymbol{x},\boldsymbol{x}'),
\end{align}

which yields the Lipschitz bound
\begin{align}
    L_{\text{cure}}(W)
    \le \tfrac14\Big(\sum_{j=1}^p \tfrac{\beta_{zj}^2}{w_j}\Big)^{1/2}.
\end{align}

To minimize $\sum_j \beta_{zj}^2/w_j$ subject to $w_j\ge0$ and $\sum_j w_j=c$,
apply Cauchy--Schwarz:
\begin{align}
    \Big(\sum_{j=1}^p |\beta_{zj}|\Big)^2
    =\Big(\sum_j \sqrt{w_j}\cdot \tfrac{|\beta_{zj}|}{\sqrt{w_j}}\Big)^2
    \le \Big(\sum_j w_j\Big)\Big(\sum_j \tfrac{\beta_{zj}^2}{w_j}\Big)
    = c\sum_j \tfrac{\beta_{zj}^2}{w_j}.
\end{align}

Thus $\sum_j \beta_{zj}^2/w_j \ge (\sum_j|\beta_{zj}|)^2/c$, 
with equality if and only if $w_j\propto |\beta_{zj}|$. 
Enforcing $\sum_j w_j=c$ fixes the constant of proportionality,
giving the unique minimizer
\begin{align}
    w_j^\star=\frac{c\,|\beta_{zj}|}{\sum_{k=1}^p|\beta_{zk}|}, 
    \qquad j=1,\dots,p,
\end{align}
which yields the optimal cure weight matrix $W_{\text{c},z}$ 
and completes the proof.

\medskip
\noindent\textbf{Theorem 2b (Optimal time weights).}
Let $\eta(\boldsymbol{x})=\lambda_0+\boldsymbol{\lambda}_z^\top \boldsymbol{x}$ and assume 
$(\log T\mid \boldsymbol{X}=\boldsymbol{x})\sim\mathcal N(\eta(\boldsymbol{x}),\sigma^2)$. Denote the non-cured expected event time by $m(\boldsymbol{x})=\mathbb E[T\mid T<H,\boldsymbol{X}=\boldsymbol{x}]$. Then there exists a finite constant $C_{\text{time}}$ 
such that
\begin{align}
    L_{\text{time}}(W) \le C_{\text{time}} \|\boldsymbol{\lambda}_z\|_{W^{-1}}.
\end{align}
Under the equal-scale constraint $\sum_{j=1}^p w_j=c$, the bound is minimized by the \textit{time weight matrix}
\begin{align}
    W_{\text{t},z} = \mathrm{diag}|\boldsymbol{\lambda}_z|.
\end{align}

\textbf{Proof of Theorem 2b.}

Conditional on $T<H$, the non-cured event time follows a truncated log-normal distribution:
\begin{align}
    T \sim \mathrm{LogNormal}(\eta(\boldsymbol{x}), \sigma^2)
    \quad \text{truncated to } (0,H),
\end{align}
where $\eta(\boldsymbol{x}) = \lambda_{z0} + \boldsymbol{\lambda}_z^\top \boldsymbol{x}$.

Its conditional CDF is
\begin{align}
    F_{T \mid T<H}(t \mid \boldsymbol{x}, z)
    = \frac{\Phi\!\left(\tfrac{\log t - \eta(\boldsymbol{x})}{\sigma}\right)}
           {\Phi\!\left(\tfrac{\log H - \eta(\boldsymbol{x})}{\sigma}\right)},
    \qquad t \in (0,H),
\end{align}
where $\Phi(\cdot)$ denotes the standard normal CDF.

By definition, the non-cured expected event time is
\begin{align}\label{eq:trunc-mean}
    m(\boldsymbol{x}, z)
    = \mathbb{E}[T \mid T<H, \boldsymbol{X}=\boldsymbol{x}, Z=z]
    = e^{\eta(\boldsymbol{x})+\sigma^2/2}\,
      \frac{\Phi\!\left(\tfrac{\log H - \eta(\boldsymbol{x}) - \sigma^2}{\sigma}\right)}
           {\Phi\!\left(\tfrac{\log H - \eta(\boldsymbol{x})}{\sigma}\right)}.
\end{align}

Thus, $m(\boldsymbol{x}, z)$ depends on $\boldsymbol{x}$ only through 
$\eta(\boldsymbol{x})=\lambda_{z0}+\boldsymbol{\lambda}_z^\top \boldsymbol{x}$.
By the chain rule,
\begin{align}
    \nabla_{\boldsymbol{x}} m(\boldsymbol{x}, z)
    = \frac{\partial m(\eta)}{\partial \eta}\,\boldsymbol{\lambda}_z.
\end{align}

\textit{Regularity.} Assume $\boldsymbol{x}$ ranges in a compact set $\mathcal X$
so that $\eta(\boldsymbol{x})\in[\eta_{min},\eta_{max}]$, where $\eta_{\min} = \inf_{\boldsymbol{x}\in\mathcal X}\eta(\boldsymbol{x})$, $\eta_{\max} = \sup_{\boldsymbol{x}\in\mathcal X}\eta(\boldsymbol{x})$.

There exists a finite constant
\begin{align}
C_{\text{time}}
= \sup_{\eta(\boldsymbol{x})\in[\eta_{min},\eta_{max}]}
\left|\frac{\partial m(\eta)}{\partial \eta}\right|
< \infty,
\end{align}

Under this regularity condition,
\begin{align}
    \|\nabla_{\boldsymbol{x}} m(\boldsymbol{x}, z)\|_{W^{-1}}
    = \Big|\tfrac{\partial m(\eta)}{\partial \eta}\Big|\,
      \|\boldsymbol{\lambda}_z\|_{W^{-1}}
    \;\le\; C_{\text{time}} \|\boldsymbol{\lambda}_z\|_{W^{-1}}.
\end{align}

By the dual-norm inequality (mean value theorem combined with Hölder),
\begin{align}
    |m(\boldsymbol{x}, z)-m(\boldsymbol{x}', z)|
    \le \sup_{\boldsymbol{x}} 
    \|\nabla_{\boldsymbol{x}} m(\boldsymbol{x}, z)\|_{W^{-1}}
    \, d_W(\boldsymbol{x}, \boldsymbol{x}'),
\end{align}

which implies the Lipschitz bound
\begin{align}
    L_{\text{time}}(W)
    \le C_{\text{time}} \|\boldsymbol{\lambda}_z\|_{W^{-1}}
    = C_{\text{time}}
      \Big(\sum_{j=1}^p \tfrac{\lambda_{zj}^2}{w_j}\Big)^{1/2}.
\end{align}

Finally, consider minimizing this Lipschitz bound under the equal-scale constraint 
$\sum_{j=1}^p w_j = c > 0$. 
Applying the Cauchy--Schwarz inequality,
\begin{align}
    \Big(\sum_{j=1}^p |\lambda_{zj}|\Big)^2
    = \Big(\sum_{j=1}^p \sqrt{w_j}\cdot \tfrac{|\lambda_{zj}|}{\sqrt{w_j}}\Big)^2
    \le \Big(\sum_{j=1}^p w_j\Big)
       \Big(\sum_{j=1}^p \tfrac{\lambda_{zj}^2}{w_j}\Big)
    = c \sum_{j=1}^p \tfrac{\lambda_{zj}^2}{w_j}.
\end{align}

Equality holds if and only if $w_j \propto |\lambda_{zj}|$. 
Enforcing $\sum_{j=1}^p w_j=c$ yields the unique minimizer
\begin{align}
    w_j^\star = \frac{c\,|\lambda_{zj}|}{\sum_{k=1}^p |\lambda_{zk}|},
    \qquad j=1,\dots,p,
\end{align}
which achieves the minimal Lipschitz bound and gives the optimal
time weight matrix $W_{\text{t},z} = \mathrm{diag}|\boldsymbol{\lambda}_z|$.
\hfill$\square$

\section{ADDITIONAL EXPERIMENTS}

In this section, we present additional simulation studies to further examine the robustness and generalizability of the proposed framework. We begin by outlining the true data-generating mechanisms used across the four experimental settings. We then report supplementary results under different matching group size ($K = 100$) and explore scenarios involving additional confounding variables that affect both treatment assignment and outcomes. Together, these experiments provide a more comprehensive evaluation of the method’s performance under varied data complexities and modeling assumptions.

\subsection{Data-Generating Mechanisms}
\paragraph{Event probability model.}  
For treatment arm $z \in \{0,1\}$, the probability of experiencing the event before the horizon $H$ is modeled by a logistic regression:
\begin{align}
    \Pr(E_z = 1 \mid \bm{X}) 
    &= \mathrm{expit}(\bm{X}^\top \boldsymbol{\beta}_{E_z}),
\end{align}
where $\mathrm{expit}(x) = 1 / (1 + e^{-x})$.  
The corresponding cure probability is $1 - \Pr(E_z = 1 \mid \bm{X})$.  
Treatment effects on the cure probability arise through differences between $\boldsymbol{\beta}_{E_1}$ and $\boldsymbol{\beta}_{E_0}$.  
Accordingly, the ground-truth heterogeneous treatment effect on the cure probability is
\begin{align}
    \mathrm{HTE}_{\text{cure}}(\bm{X}) 
    &= [1 - \Pr(E_1 = 1 \mid \bm{X})] - [1 - \Pr(E_0 = 1 \mid \bm{X})] \nonumber\\
    &= \Pr(E_0 = 1 \mid \bm{X}) - \Pr(E_1 = 1 \mid \bm{X}).
\end{align}

\paragraph{Event time model (for uncured subjects).}  
Conditional on $E_z = 1$, the event time follows a truncated log-normal accelerated failure time (AFT) model:
\begin{align}
    \log(T_z) \mid (E_z = 1, \bm{X}) 
    &\sim \mathcal{N}(\bm{X}^\top \boldsymbol{\lambda}_{z}, \sigma_z^2),
    \quad T_z \le H,
\end{align}
where $H=800$ denotes the administrative censoring horizon.  
The corresponding theoretical mean of the truncated log-normal distribution is
\begin{align}
    \mathbb{E}[T_z \mid \bm{X}, T_z \leq H]
    &= \exp\!\left(\bm{X}^\top \boldsymbol{\lambda}_{z} + \tfrac{1}{2}\sigma_z^2\right)
       \frac{\Phi\!\left( \frac{\log H - \bm{X}^\top \boldsymbol{\lambda}_{z} - \sigma_z^2}{\sigma_z} \right)}
            {\Phi\!\left( \frac{\log H - \bm{X}^\top \boldsymbol{\lambda}_{z}}{\sigma_z} \right)},
\end{align}
where $\Phi(\cdot)$ is the standard normal CDF, and we specified $\sigma_z^2=1$.  
Treatment effects on event timing are induced by differences between $\boldsymbol{\lambda}_{1}$ and $\boldsymbol{\lambda}_{0}$.  
The ground-truth heterogeneous treatment effect on the expected event time is thus defined as
\begin{align}
    \mathrm{HTE}_{\text{time}}(\bm{X})
    &= \mathbb{E}[T_1 \mid \bm{X}, T_1 \leq H] - \mathbb{E}[T_0 \mid \bm{X}, T_0 \leq H].
\end{align}

\paragraph{Observed data generation.}  
For each subject $i = 1, \ldots, n$, the observed survival time $Y_i$ and event indicator $\delta_i$ are generated as
\begin{align}
    T_i &= Z_i T_{1i} + (1 - Z_i) T_{0i}, \\
    E_i &= Z_i E_{1i} + (1 - Z_i) E_{0i}, \\
    C_i &\sim \mathrm{Uniform}(0, 1.5H), \\
    Y_i &= \min(T_i, C_i, H), \quad
    \delta_i = \mathbb{I}(T_i < \min(C_i, H)).
\end{align}
The treatment assignment $Z_i$ is generated according to
\begin{align}
    \Pr(Z_i = 1 \mid \bm{X}_i) = \mathrm{expit}(\bm{X}_i^\top \boldsymbol{\beta}_Z),
\end{align}
where $\boldsymbol{\beta}_Z$ controls the dependence of treatment on covariates.

\subsection{Parameter Settings for the Four Simulation Scenarios}\label{simu: settings}

We consider four simulation scenarios to represent distinct mechanisms of treatment effects on cure probability and event timing. 
Each scenario specifies how the treatment arm modifies the corresponding parameter vectors 
$\boldsymbol{\beta}_{E_z}$ and $\boldsymbol{\Lambda}_{z}$. 

The covariate vector is denoted by 
\[
\bm{X} = (X_1, X_2, \ldots, X_{20})^\top,
\]
where $X_1$–$X_{10}$ are binary variables independently generated from $\mathrm{Bernoulli}(0.5)$, 
and $X_{11}$–$X_{20}$ are continuous variables independently drawn from $\mathcal{N}(0,1)$.

\paragraph{Scenario-specific parameters.}
Across all scenarios, the treatment assignment model is identical:
\begin{align}
    \Pr(Z=1 \mid \bm{X}) &= \mathrm{expit}(0.5 X_{20}),
\end{align}
so that $\boldsymbol{\beta}_Z = (0,\ldots,0,0.5)^\top$.

\medskip
\noindent\textbf{(1) Cure Only.}  
Treatment affects only the event probability, while the event-time model is identical between arms.
\begin{align}
    \Pr(E=1 \mid Z=1, \bm{X}) &= \mathrm{expit}\big(0.7 + 0.5X_{17} + 0.5X_{18} + 0.8X_{19}\big), \\
    \Pr(E=1 \mid Z=0, \bm{X}) &= \mathrm{expit}\big(1.2 + 0.3X_{1} + 0.5X_{2} + 0.2X_{10}\big), \\[4pt]
    \log T \mid (E=1, \bm{X}) &\sim \mathcal{N}\big(3.5 + Z + 0.3X_{2} + 0.5X_{19},\, 1^2\big),
\end{align}
where the shared time model implies $\lambda_{1,0} - \lambda_{0,0} = 1$ (intercept shift only, no heterogeneous effect).

\medskip
\noindent\textbf{(2) Time Only.}  
Treatment affects only the event-time model; the cure probability remains the same across arms.
\begin{align}
    \Pr(E=1 \mid \bm{X}) &= \mathrm{expit}\big(1 + 0.2X_1 - 0.1X_2 - 0.8X_{17} + 0.5X_{18} + 0.4X_{19} - 0.8X_{20}\big), \\[4pt]
    \log T \mid (E=1, Z=1, \bm{X}) &\sim \mathcal{N}\big(4.5 + 0.4X_1 - 0.2X_2 + 0.3X_3 + 0.8X_{15} - 0.6X_{16} - 0.5X_{17},\, 1^2\big), \\
    \log T \mid (E=1, Z=0, \bm{X}) &\sim \mathcal{N}\big(3.5 + 0.2X_1 + 0.2X_2 + 0.5X_3 + 0.2X_{15} - 0.1X_{16} - 0.1X_{17},\, 1^2\big).
\end{align}

\medskip
\noindent\textbf{(3) Both + Independent.}  
Treatment affects both cure and time components, but with disjoint sets of covariates.
\begin{align}
    \Pr(E=1 \mid Z=1, \bm{X}) &= \mathrm{expit}\big(0.7 + 0.3X_{17} + 0.7X_{18} + 0.2X_{19} - 0.8X_{20}\big), \\
    \Pr(E=1 \mid Z=0, \bm{X}) &= \mathrm{expit}\big(1.5\big), \\[4pt]
    \log T \mid (E=1, Z=1, \bm{X}) &\sim \mathcal{N}\big(4.5 + 1.2X_2 - 0.1X_3 + 0.3X_{15},\, 1^2\big), \\
    \log T \mid (E=1, Z=0, \bm{X}) &\sim \mathcal{N}\big(3.5 + 0.3X_2 + 0.2X_3 + 0.3X_{15},\, 1^2\big).
\end{align}

\medskip
\noindent\textbf{(4) Both + Overlap.}  
Treatment affects both mechanisms, and some covariates influence both the cure and time components.
\begin{align}
    \Pr(E=1 \mid Z=1, \bm{X}) &= \mathrm{expit}\big(0.7 + 0.2X_3 + 0.3X_{17} + 0.7X_{18} + 0.2X_{19} - 0.8X_{20}\big), \\
    \Pr(E=1 \mid Z=0, \bm{X}) &= \mathrm{expit}\big(1.5 - 0.4X_3\big), \\[4pt]
    \log T \mid (E=1, Z=1, \bm{X}) &\sim \mathcal{N}\big(4.5 + 1.2X_2 + 0.3X_{15} + 0.5X_{17} - 0.6X_{18},\, 1^2\big), \\
    \log T \mid (E=1, Z=0, \bm{X}) &\sim \mathcal{N}\big(3.5 + 0.3X_2 + 0.4X_3 + 0.3X_{15} + 0.1X_{17} - 0.2X_{18},\, 1^2\big).
\end{align}

\subsection{Results with $K=100$}

To assess the robustness of the proposed method to the choice of the matching parameter $K$, 
we repeated all simulation experiments using $K = 100$ nearest neighbors, 
in addition to the $K = 50$ results presented in the main text. 
Consistent with the main experiments, each simulated dataset contains $n = 20{,}000$ samples, 
with $35\%$ ($7{,}000$) used for training the mixture cure model and $65\%$ ($13{,}000$) used for treatment effect estimation.  
Table~\ref{tab:combined_mae_K100} reports the mean absolute error (MAE, $\pm$SD) 
for heterogeneous treatment effect (HTE) estimation on both cure probability and conditional mean event time (CMET) across the four settings. 

Overall, the results show that the proposed method remains stable with larger matching groups, 
exhibiting only minor variations in estimation accuracy across all scenarios. 
In fact, when increasing $K$ improves the effective match quality without introducing excessive heterogeneity, 
our method tends to achieve slightly higher estimation accuracy.  
As the Cox model does not rely on matching, its results remain unchanged from those in the main text.  
In practice, the optimal choice of $K$ may depend on the available sample size and the trade-off between bias reduction and variance control.

\begin{table*}[htbp]
    \centering
    \caption{MAE ($\pm$SD) of HTE Estimation on Cure Probability ($\times 100$) and CMET Across Four Settings ($K=100$)}
    \resizebox{\textwidth}{!}{
    \begin{tabular}{l|cc|cc|cc|cc}
        \hline
        \textbf{Method} & \multicolumn{2}{c|}{\textbf{Cure Only}} & \multicolumn{2}{c|}{\textbf{Time Only}} & \multicolumn{2}{c|}{\textbf{Both+Indep}} & \multicolumn{2}{c}{\textbf{Both+Overlap}} \\
        & Cure & Time & Cure & Time & Cure & Time & Cure & Time \\
        \hline
        Oracle                & 5.3 $\pm$ 0.4 & 12.3 $\pm$ 0.5 & 4.6 $\pm$ 0.2 & 22.0 $\pm$ 0.8 & 5.2 $\pm$ 0.2 & 16.0 $\pm$ 0.6 & 5.3 $\pm$ 0.2 & 18.2 $\pm$ 0.7 \\
        Partial Oracle        & 5.9 $\pm$ 0.2 & 19.0 $\pm$ 0.7 & 5.1 $\pm$ 0.3 & 27.7 $\pm$ 1.0 & 5.8 $\pm$ 0.3 & 20.8 $\pm$ 0.8 & 6.0 $\pm$ 0.3 & 24.4 $\pm$ 0.9 \\
        \textbf{MCM KNN}      & 6.6 $\pm$ 0.3 & 18.1 $\pm$ 0.6 & 5.4 $\pm$ 0.3 & 27.5 $\pm$ 1.0 & 6.6 $\pm$ 0.3 & 21.2 $\pm$ 0.8 & 6.9 $\pm$ 0.3 & 24.8 $\pm$ 1.0 \\
        MCM KNN (Combined)    & 6.6 $\pm$ 0.3 & 18.3 $\pm$ 0.6 & 5.6 $\pm$ 0.3 & 34.8 $\pm$ 1.2 & 7.0 $\pm$ 0.3 & 24.4 $\pm$ 1.0 & 7.1 $\pm$ 0.4 & 27.5 $\pm$ 1.2 \\
        Feature Selection KNN & 7.0 $\pm$ 0.3 & 19.5 $\pm$ 0.7 & 5.7 $\pm$ 0.3 & 38.9 $\pm$ 1.5 & 8.2 $\pm$ 0.4 & 36.3 $\pm$ 1.3 & 8.4 $\pm$ 0.4 & 36.0 $\pm$ 1.4 \\
        Euclidean KNN         & 8.6 $\pm$ 0.3 & 22.5 $\pm$ 0.8 & 6.0 $\pm$ 0.3 & 44.9 $\pm$ 1.8 & 9.7 $\pm$ 0.5 & 50.1 $\pm$ 1.7 & 10.1 $\pm$ 0.5 & 51.7 $\pm$ 1.9 \\
        Propensity Score KNN  & 16.3 $\pm$ 0.5 & 30.7 $\pm$ 1.0 & 7.6 $\pm$ 0.4 & 90.7 $\pm$ 2.5 & 18.5 $\pm$ 0.6 & 67.2 $\pm$ 2.0 & 18.5 $\pm$ 0.7 & 85.6 $\pm$ 2.8 \\
        Prognostic Score KNN  & 14.3 $\pm$ 0.4 & 29.9 $\pm$ 1.0 & 6.1 $\pm$ 0.3 & 88.0 $\pm$ 2.2 & 12.7 $\pm$ 0.4 & 56.5 $\pm$ 1.8 & 13.6 $\pm$ 0.5 & 67.1 $\pm$ 2.4 \\
        Cox Model (no match)  & 5.2 $\pm$ 0.6 & 36.2 $\pm$ 2.2 & 8.2 $\pm$ 0.7 & 78.7 $\pm$ 3.1 & 7.7 $\pm$ 0.7 & 58.0 $\pm$ 1.3 & 8.6 $\pm$ 0.7 & 67.5 $\pm$ 1.5 \\
        \hline
    \end{tabular}
    }
    \label{tab:combined_mae_K100}
\end{table*}

\subsection{Additional Experiments under Mild Confounding}

To evaluate robustness to treatment–outcome confounding, 
we repeated all simulations using the same cure and event time models as in Section~\ref{simu: settings}, 
but modified the treatment assignment mechanism to depend on additional outcome-related covariates. 
This induces mild confounding, as some predictors jointly affect treatment and the outcomes. 
Specifically, for each of the four simulation settings, the treatment assignment model is given by

\medskip
\noindent\textbf{(1) Cure Only.}
\begin{align}
    \Pr(Z=1 \mid \bm{X}) 
    &= \mathrm{expit}(0.2 X_{2} + 0.8 X_{3} + 0.3 X_{19} - 0.5 X_{20} + 0.5).
\end{align}

\noindent\textbf{(2) Time Only.}
\begin{align}
    \Pr(Z=1 \mid \bm{X}) 
    &= \mathrm{expit}(0.3 X_{2} - 0.2 X_{3} + 0.4 X_{16} + 0.5 X_{17} + 0.5).
\end{align}

\noindent\textbf{(3) Both + Independent.}
\begin{align}
    \Pr(Z=1 \mid \bm{X}) 
    &= \mathrm{expit}(0.3 X_{3} - 0.2 X_{4} + 0.4 X_{16} + 0.6 X_{17} + 0.5).
\end{align}

\noindent\textbf{(4) Both + Overlap.}
\begin{align}
    \Pr(Z=1 \mid \bm{X}) 
    &= \mathrm{expit}(0.4 X_{3} + 0.3 X_{16} + 0.6 X_{17} - 0.5 X_{18} + 0.5).
\end{align}

\begin{table*}[htbp]
    \centering
    \caption{MAE ($\pm$SD) of HTE Estimation under Mild Confounding ($K=50$)}
    \resizebox{\textwidth}{!}{
    \begin{tabular}{l|cc|cc|cc|cc}
        \hline
        \textbf{Method} & \multicolumn{2}{c|}{\textbf{Cure Only}} & \multicolumn{2}{c|}{\textbf{Time Only}} & \multicolumn{2}{c|}{\textbf{Both+Indep}} & \multicolumn{2}{c}{\textbf{Both+Overlap}} \\
        & Cure & Time & Cure & Time & Cure & Time & Cure & Time \\
        \hline
        Oracle                & 6.6 $\pm$ 0.2 & 17.4 $\pm$ 1.3 & 6.3 $\pm$ 0.2 & 25.6 $\pm$ 1.6 & 6.6 $\pm$ 0.3 & 20.2 $\pm$ 1.2 & 6.7 $\pm$ 0.3 & 22.5 $\pm$ 1.4 \\
        Partial Oracle        & 7.4 $\pm$ 0.4 & 28.3 $\pm$ 1.5 & 7.4 $\pm$ 0.3 & 35.0 $\pm$ 1.9 & 7.7 $\pm$ 0.4 & 29.7 $\pm$ 1.6 & 7.8 $\pm$ 0.4 & 32.5 $\pm$ 1.7 \\
        \textbf{MCM KNN}      & 8.0 $\pm$ 0.4 & 26.8 $\pm$ 1.2 & 7.8 $\pm$ 0.4 & 40.6 $\pm$ 2.3 & 8.2 $\pm$ 0.5 & 33.4 $\pm$ 1.9 & 8.1 $\pm$ 0.5 & 35.5 $\pm$ 1.9 \\
        MCM KNN (Combined)    & 8.2 $\pm$ 0.5 & 27.2 $\pm$ 1.3 & 7.7 $\pm$ 0.4 & 34.7 $\pm$ 2.0 & 8.2 $\pm$ 0.5 & 30.5 $\pm$ 1.7 & 8.6 $\pm$ 0.4 & 33.4 $\pm$ 1.8 \\
        Feature Selection KNN & 8.3 $\pm$ 0.3 & 27.6 $\pm$ 1.4 & 7.8 $\pm$ 0.5 & 40.9 $\pm$ 2.2 & 8.5 $\pm$ 0.5 & 33.3 $\pm$ 1.8 & 8.6 $\pm$ 0.5 & 36.1 $\pm$ 1.9 \\
        Euclidean KNN         & 9.6 $\pm$ 0.6 & 30.6 $\pm$ 1.7 & 8.4 $\pm$ 0.6 & 48.9 $\pm$ 2.5 & 10.5 $\pm$ 0.7 & 53.1 $\pm$ 2.4 & 11.0 $\pm$ 0.8 & 57.2 $\pm$ 2.5 \\
        Propensity Score KNN  & 16.5 $\pm$ 0.8 & 34.4 $\pm$ 1.6 & 9.6 $\pm$ 0.5 & 81.1 $\pm$ 2.7 & 19.1 $\pm$ 0.8 & 70.4 $\pm$ 2.6 & 18.8 $\pm$ 0.8 & 66.8 $\pm$ 2.3 \\
        Prognostic Score KNN  & 14.0 $\pm$ 0.7 & 36.2 $\pm$ 1.8 & 7.7 $\pm$ 0.5 & 91.1 $\pm$ 4.1 & 15.7 $\pm$ 0.7 & 54.9 $\pm$ 2.3 & 16.2 $\pm$ 0.8 & 63.2 $\pm$ 2.4 \\
        Cox Model (no match)  & 4.0 $\pm$ 0.4 & 39.4 $\pm$ 2.3 & 9.7 $\pm$ 0.6 & 74.4 $\pm$ 3.5 & 7.4 $\pm$ 0.5 & 57.7 $\pm$ 2.6 & 8.4 $\pm$ 0.5 & 69.1 $\pm$ 1.7 \\
        \hline
    \end{tabular}
    }
    \label{tab:confounded_mae}
\end{table*}

\paragraph{Results.}
Table~\ref{tab:confounded_mae} reports the mean absolute error (MAE, $\pm$SD) 
for heterogeneous treatment effect (HTE) estimation on cure probability and conditional mean event time (CMET) 
under these mildly confounded treatment models, using $K = 50$ nearest neighbors.

Compared with the non-confounded baseline, the introduction of mild confounding leads to a modest increase in MAE, as expected, yet the overall performance ranking remains unchanged. The proposed MCM-based estimators maintain estimation accuracy close to the partial oracle across all four settings, indicating strong robustness of the learned distance metrics. These results suggest that the method can tolerate moderate dependence between treatment assignment and covariates without substantial degradation in HTE estimation.

\end{document}